\definecolor{grau}{gray}{.5}
\definecolor{schwarz}{gray}{0}
\newtheorem{thm}{Theorem}[section]
\newtheorem{lem}[thm]{Lemma}
\newtheorem{defn}[thm]{Definition}
\newtheorem{proof}{\emph{Proof.}}
\newtheorem{rem}[thm]{Remark}
\newcommand{\reff}[1]{(\ref{#1})}
\newcommand{\ol}[1]{\overline{#1}}
\newcommand{\cg}[1]{\mathcal{#1}}
\newcommand{\av}[1]{\left|#1\right|}
\newcommand{\Ll}[1]{\left\|#1\right\|}
\newcommand{\N}[2]{\left\|#1\right\|_{#2}}
\newcommand{\brkts}[1]{\left(#1\right)}
\newcommand{\ebrkts}[1]{\left[#1\right]}
\newcommand{\brcs}[1]{\left\{#1\right\}}
\newcommand{\abrkts}[1]{\langle #1\rangle}
\newcommand{\bsplitl}[2]{
\begin{equation}
\begin{split}
#1
\end{split}
\label{#2}
\end{equation}}
\newcommand{\bsplit}[1]{
\begin{equation*}
\begin{split}
#1
\end{split}
\end{equation*}}
\begin{document}

\preprint{AIP/123-QED}

\title[New porous medium Poisson-Nernst-Planck equations]{New porous medium Poisson-Nernst-Planck equations\\
for strongly oscillating electric potentials}

\author{M. Schmuck}
\affiliation{ 
Departments of Chemical Engineering and Mathematics,	Imperial College, London SW7 2AZ, UK,
}%
\email{m.schmuck@imperial.ac.uk.}


\date{\today}

\begin{abstract}
We consider the Poisson-Nernst-Planck system which is well-accepted for describing dilute electrolytes as well as 
transport of charged species in homogeneous environments. Here, we study these equations in porous media whose 
electric permittivities show a contrast compared to the electric permittivity of the electrolyte phase. Our main result is the 
derivation of convenient low-dimensional equations, that is, of effective macroscopic porous media 
Poisson-Nernst-Planck equations, which reliably describe ionic transport. 
The contrast in the electric permittivities between liquid and solid phase and the heterogeneity of the 
porous medium induce strongly oscillating electric potentials (fields). In order to account for this special 
physical scenario, we introduce a modified asymptotic multiple-scale expansion which takes advantage of the nonlinearly 
coupled structure of the ionic transport equations. This allows for a systematic upscaling resulting in a new effective porous medium formulation which shows a new transport term on the macroscale. Solvability of all arising equations is rigorously verified. This emergence of a new transport term indicates promising physical 
insights into the influence of the microscale material properties on the macroscale. Hence, systematic upscaling strategies
provide a source and a prospective tool to capitalize intrinsic scale effects for scientific, engineering, and 
industrial applications.

%
\end{abstract}

\maketitle


\section{Introduction}\label{sec:Intr}
The Poisson-Nernst-Planck (PNP) equations can be applied in many different 
physical contexts such as modeling of ionic transport, e.g. batteries, supercapacitors, fuel cells,
and capacitive desalination devices. Especially the fields of electrokinetics and electrohydrodynamics 
gained an increasing interest in recent years. Actual research aims to take advantage of 
scale effects in micro- and nano-fluidic devices for industrial applications and for the creation of 
chip-like devices (``lab on a chip''). 
Such devices can perform separation, mixing, and chemical analysis tasks. 
It is also possible to design electrokinetic 
pumps.\cite{Ajdari2000}

The study of geometric effects on the scale of cell membranes, muscles, and neurons 
by means of PNP equations finds currently a lot of 
attention in biology and medcine.\cite{Eisenberg1999,Gardner2000} The essential goal 
is to better understand how 
calcium ions, i.e. $Ca^{\,2+}$-ions, move in voltage-dependent calcium channels for example. 
These channels are a group of voltage-gated ion channels which can be found in muscles, 
glial cells, and neurons. Recent research attempts to mimic such biological ion channels with synthetically built channels.\cite{Siwy2003} For example by modifying channel geometry and surface charge one 
tries to better understand the effect of rectification. Rectificiation can be descriptively 
explained by the comparison of ionic flux with an electric current through 
a pn-diode. One usually studies rectfication factors (ratio of forward current to reverse 
current) in this context, see \onlinecite{Cheng2009} for example.

This broad range of applications in heterogeneous environments strongly rely on models 
which reliably and systematically account for effects of the microscale on the macroscale. A very common 
approach for deriving effective macroscopic equations is volume averaging.\cite{Pivonka2009,Quintard1994,Whitaker1986} Unfortunately, it is still unclear how to systematically treat nonlinear terms by this intuitive method. A technically slightly more involved 
approach is the homogenization method\cite{Bensoussans1978,Cioranescu2000,Pavliotis2008} which provides a reliable and systematic alternative under the assumption of a periodic pore distribution.

The general importance and the strong demand of properly upscaled equations in engineering, and 
design as well as optimization of scientific and industrial devices call for mathematical tools that rely on well-established principles for multiscale problems. Here, we want to systematically extend the widely accepted PNP system from the free space case towards solid-electrolyte composites showing a high contrast between its electric permittivities. For this purpose, we consider the full PNP equations with the help of a modified asymptotic two-scale expansion. This new approach accounts for the nonlinear and 
coupled structure of the system, see Theorem \ref{thm:AsExp}. As our main result, we derive the following new effective macroscopic porous 
medium PNP equations, that is,
\bsplitl{
\begin{cases}
p\partial_t {\rm u}^r_0
	-p\Delta {\rm u}^r_0 
	+{\rm div}\brkts{\mathbb{D}^r(t,x) \nabla{\rm u}^3_0}
	-{\rm div}\brkts{z_r{\rm u}_0^r\mathbb{M}\nabla{\rm u}^3_0}
	=0
	&\textrm{in }\Omega_T\,,
\\
-{\rm div}\brkts{\pmb{\epsilon}^0\nabla {\rm u}_0^3}
	=p\brkts{ {\rm u}_0^1-{\rm u}_0^2}
	&\textrm{in }\Omega_T\,,
\end{cases}
}{MaReIn}
which is valid under local (pore level) thermodynamic equilibrium and for arbitrary Debye lengths $\lambda_D>0$. The parameter $p$ denotes the porosity and $\mathbb{D}^r$, $\mathbb{M}$, and $\pmb{\epsilon}^0$ are effective transport 
tensors defined by the upscaling subsequently performed. The 
variable ${\rm u}_0^r$ represents effective macroscopic quantities such as the concentration of positively charged ions $n^+$ for $r=1$, the density of negatively charged ions $n^-$ for $r=2$, and the electrostatic potential $\Phi$ for $r=3$.

All equations appearing during the upscaling are rigorously justified by well-posedness criteria. In particular, Lemma \ref{lem:ExReMo} (Section \ref{sec:pmPNP}) 
guarantees the solvability of the new system \reff{MaReIn} which shows a new term $\mathbb{D}^r$ that accounts 
for a dominant influence of the oscillating electric potential on the concentrations.
We emphasize that this new term emerges as a result 
of an adapted asymptotic multi-scale expansion introduced in order to account for the heterogeneity induced 
by the porous medium and by the electric permittivities.
That means, in the classical asymptotic expansion
\bsplitl{
u_s(x) = u_0(x,x/s)
	+s u_1(x,x/s)
	+s^2 u_2(x,x/s)
	+\dots\,,
}{StAsEx}
we assume a special separation between the micro- and macro-scale in the terms $u_1$ and $u_2$ 
in difference to related literature\cite{Allaire2010,Looker2006,Ray2011,Schmuck2010} and classical 
homogenization theory,\cite{Cioranescu2000,Sanchez-Palencia1986} see \reff{AsExp} in Theorem \ref{thm:AsExp} below. 
We point out that the nonlinear character of the 
Nernst-Planck equations leads with \reff{StAsEx} to ill-posed reference cell problems without 
additional assumptions. Hence, we suppose that the reference cells, which define the 
micro-geometry of the porous medium, are in thermodynamic equilibrium. This guarantees then 
well-posedness. This solvability issue might also be the reason why 
the upscaling of the PNP equations was mainly restricted to thin double layer type 
approximations\cite{Holmes1990,Moyne2002} or linearized 
formulations in the context of Onsager reciprocal relations\cite{Onsager1931a,Onsager1931b} so far.\cite{Allaire2010,Looker2006} 
Physically, such situations occur when the electric 
permittivities between the electrolyte and the solid material are far apart, see Section \ref{sec:DiPes}.

\medskip

The article is organized in the following way: The dominant oscillating behavior of the electric potential 
is motivated by the contrast in the electric permittivities between the solid and the liquid phase in 
Section \ref{sec:DiPes} where also a related effective media theory is discussed. A historical overview of closely related upscaling results is given in 
Section \ref{sec:HiOvs}. In Section \ref{sec:Not}, we state elementary 
results and introduce necessary notation. The main results follow in Section \ref{sec:pmPNP}.  
Finally, we prove all results in Section \ref{sec:Proof}.

\subsection{Physical motivation: Dielectric permittivities of solids and liquids}\label{sec:DiPes}
In this section, we state the physical setting which leads to strongly oscillating electric potentials in 
composites such as a porous medium permeated by a dilute electrolyte. A related example where 
such oscillations are well-known is the electric field over a material with strongly heterogeneous conductivities. This is also one of the classical fields of effective media theory\cite{Torquato2002} and homogenization theory\cite{Bensoussans1978,Cioranescu2000} where one often assumes a periodic representation of the heterogeneities for simplicity.
In fact, the high-frequency electric permittivity and the low-frequency electric conductivity are formally equivalent because of the 
equivalence in the governing equations. However, 
the situation for the PNP system here is slightly different since we have to deal with a non-linearly 
coupled system of equations. As explained previously, we account for this difference by a non-standard asymptotic 
expansion that factors the strong influence of the electric potential in. Moreover, the equivalence between permittivity and conductivity implies that their mathmatical computation is equivalent.

As in the case of conductivity, the strength of the oscillations can be controlled by the distance between the 
different electric permittivities for our problem here. Since we study a dilute electrolyte, we can expect an electric 
permittivity of the liquid phase to be around $80$ at room temperature and a frequency under $1kHz$ 
(of course, this also depends on the electrolyte employed). For the solid 
phase, we can expect an electric permittivity between $2$ and $5$, i.e., paper $3$,
alumina $4.5$,
teflon $2.1$,
porcelain $5.1$, and 
plexiglas $2.8$. But in many fields, a systematic derivation of effective media quantities such as the electric permittivity is still lacking.\cite{Brovelli2008} Our subsequently derived equations reliably 
define such an effective electric permittivity for periodic porous media. We emphasize the importance of characterizing 
porous materials with respect to dielectric properties in microelectronics.\cite{Maex2003} Moreover, we believe that a 
systematic and reliable upscaling of such 
complex composites using geometric and material properties together with experimental validation gives 
promising perspectives for new scientific, technological, and industrial applications.

\subsection{Review of related upscaling results of the PNP equations}\label{sec:HiOvs}

We briefly give a shortened historical overview, see Table \ref{tab:Cprsn}, and point out differences of contributions 
mainly based on the homogenization method. Table \ref{tab:Cprsn} examplifies how different upscaled formulations arise upon different assumptions initially (i.e., on the macroscale) made.

\begin{table}[htdp]
\caption{Effective PNP equations: Various different upscaled formulations arise under different assumptions. We abbreviate by ''PB'' the ''Poisson-Boltzmann equations'', by ''PNP'' the ''PNP equations'', and by ''Onsager'' the 
''Onsager reciprocal relations''.}
\tiny
\begin{center}
\begin{ruledtabular}
\begin{tabular}{ p{2.0cm}  p{2.8cm}  p{2.7cm}  p{3.2cm}  p{1.8cm} }
{\bf Article} 
	& {\bf PB} 
	& {\bf linearized PNP} 
	& {\bf PNP} 
	& {\bf Onsager}
\\\hline
Jackson et al. 1963, \cite{Jackson1963} 
	& \textbullet\, given periodic electric potential or 
	PB equations
	\newline
	\textbullet\, stationary case
	\newline
	\textbullet\, effective diffusion equation, no homogenization
	& 
	&  
	& 
\\\hline
Gross et al. 1968, \cite{Gross1968} 
	& \textbullet\, no homogenization 
	& 
	&
	& \begin{center} 
	\checkmark 
	\end{center}
\\\hline
M. H. Holmes 1990, \cite{Holmes1990} 
	& 
	&  
	& \textbullet\, generalization of Jackson et al. 1963 including surface charge
	\newline 
	\textbullet\, homogenization in Slotboom variables
	\newline
	\textbullet\, surface charge
	& 
\\\hline
Moyne et al. 2002, \cite{Moyne2002}, and Moyne et al. 2004, \cite{Moyne2004} 
	& 
	 & 
	 & \textbullet\, similar to Ray et al. 2011: Kind of thin double layer limit (PB in  reference cells)
	 	\newline 
	\textbullet\, surface charge
	&
\\\hline
Looker et al. 2006, \cite{Looker2006}
	&
	& \textbullet\, local equilibrium \newline
		\textbullet\, local electroneutrality \newline
		\textbullet\, fluid flow  \newline
		\textbullet\, surface charge
	&
	& \begin{center} \checkmark \end{center}
\\\hline	
Allaire et al. 2010, \cite{Allaire2010},
		based on Looker et al. 2006, \cite{Looker2006}
	&
	& \textbullet\, local equilibrium \newline
		\textbullet\, bulk electroneutrality \newline
		\textbullet\, fluid flow  \newline
		\textbullet\, surface charge
	&
	& \textbullet\, rigorous with two-scale convergence
\\\hline
Schmuck 2011, \cite{Schmuck2010}\newline
Schmuck 2012, \cite{Schmuck2012zamm}
	&
	&
	& \textbullet\, local equilibrium \newline
		\textbullet\, without surface charge
	&	
\\\hline
Ray et al. 2012, \cite{Ray2012} 
	&
	&
	& \textbullet\, Debye length as homogenization parameter
		 \newline
		 \textbullet\, without surface charge
		 \newline
		 \textbullet\, fluid flow
	&
\end{tabular}
\end{ruledtabular}
\end{center}
\label{tab:Cprsn}
\end{table}%

\medskip

Based on the stationary Nernst-Planck equations and a given periodic electric potential, 
the authors derive an effective diffusion equation for ion densities in \onlinecite{Jackson1963}. 
In \onlinecite{Gross1968}, a model for charged membranes separating dilute aqueous salt 
solutions is studied. The authors give expressions for Onsager's reciprocal relations\cite{Onsager1931a,Onsager1931b} without assuming 
small Debye-lengths nor a Debye-H\"uckel linearizaion of the Poisson-Boltzmann equation. 
Based on ideas from \onlinecite{Jackson1963}, Holmes\cite{Holmes1990} performs an asymptotic two-scale 
expansion for a PNP formulation rewritten in Slotboom 
variables. This work gives a very interesting approach for nonlinear diffusion in charged polymers. 
The results seem to be closely related to upscaling of electrostatic potentials $\phi_0(y)$ only
depending on the micro-scale $y$ (which is related to the use of periodic potentials 
in \onlinecite{Jackson1963}).

In \onlinecite{Moyne2002}, a macroscopic electrokinetic formulation describing electro-osmotic flow 
and electrophoretic motion in periodic porous media is obtained by the classical multiple-scale 
expansion method. They also perform a kind of thin double layer approximation in the reference cells by 
Poisson-Boltzmann equations. This approximation is well-known and frequently 
applied in electro-chemistry.\cite{Bazant2004a} The same authors\cite{Moyne2004} 
apply the periodic homogenization theory to upscale the 
Nernst-Planck equations in the medical and biological context of cartilage.

In \onlinecite{Looker2006}, the fundamental Onsager reciprocal relations\cite{Onsager1931a,Onsager1931b} together with positive definiteness of 
corresponding upscaled tensors are derived under the assumption of local thermodynamic equilibrium. 
Their starting point is a linearized \cite{O'Brien1978} $N$-component electrolyte in a dilute Newtonian 
fluid (small Reynolds number) flowing through a periodic porous medium. The authors also consider 
the physically interesting case of surface charge. However, the influence of the surface charge is not 
obvious at the end. The impact of such charges on the macroscopic level is of major scientific and engineering 
interest in microfluidics\cite{Mani2011,Dydek2011,Schmuck2012} and neurobiology\cite{Siwy2003}. 

Very recently, Allaire et al.\cite{Allaire2010} put the physically relevant derivations from \onlinecite{Looker2006} into the rigorous 
framework of the two-scale convergence. The main 
purpose is again the verification of Onsager's reciprocal relations as in \onlinecite{Looker2006}. 
Derivations of such relations require the assumption of local thermodynamic 
equilibrium, a linearized PNP system,\cite{O'Brien1978} and an electroneutrality 
assumption in the bulk which is physically closely related to a thin double layer approximation.

In \onlinecite{Ray2011}, the authors perform a singular limit with respect to the dimensionless 
Debye length. A weighted Debye length, i.e., 
$\lambda^\alpha$, and the use of $\lambda$ as the homogenization parameter has the meaning 
of upscaling the PNP system parallel to taking special ($\alpha>0$ arbitrary)  
thin double layer 
limits of the system. Espeically for $\alpha=1$, this is an interesting problem since the thin double layer 
approximation is a widely used simplification as already mentioned above.

\section{Notation and preliminaries}\label{sec:Not}
The following classical exposition recalls central definitions and results from \onlinecite{Bensoussans1978,Cioranescu2000,Pavliotis2008}.
For the microscopic variable $y := \frac{x}{s}$, we obtain the following relation for 
gradients applied to functions 
$\psi_s(x):=\psi\brkts{x,\frac{x}{s}}$, i.e.,
\bsplitl{
\nabla \psi_s(x)
	= \nabla_x\psi(x, y)
	+\frac{1}{\epsilon}\nabla_y\psi(x,y)\,.
}{nabla}

\medskip

Homogeneous Neumann problems for Poisson equations for example require the use of
the quotient space $W_\sharp(Y):=H^1_\sharp(Y)/\mathbb{R}$. This space of 
equivalence classes is defined with respect to the relation,
\bsplitl{
u \simeq v\quad \Leftrightarrow\quad  u-v \textrm{ is a constant, }\forall u,v\in H^1_\sharp(Y)\,.
}{Eqv}
For notational brevity, we do not introduce additional notation for an element of this equivalence 
class. A representative element of this equivalence class can be chosen by the 
following mean zero condition, that means,
\bsplitl{
W_\sharp(Y)
	:= \brcs{u\in H^1_\sharp(Y)\,\bigr|\, \mathcal{M}_Y(u)=0}\,,
}{Wp}
where,
\bsplitl{
\mathcal{M}_Y(u)
	:= \frac{1}{\av{Y}}\int_Y u(y)\,dy\,.
}{MY}

\medskip

\begin{lem}\label{lem:WpDual}
The following quantitiy, 
\bsplitl{
\N{u}{W_\sharp}
	= \N{\nabla u}{L^2(Y)}\qquad\forall u,v\in W_\sharp(Y)\,,
}{WpD}
defines a norm on $W_\sharp(Y)$. Moreover, the dual space $(W_\sharp(Y))'$ can be 
identified by the set,
\bsplitl{
\brcs{F\in (H^1_\sharp(Y))Õ\,\bigr|\, F(c)=0\quad \forall c\in\mathbb{R}}\,,
}{DS}
with,
\bsplitl{
\abrkts{F, u}_{(W_\sharp(Y))',W_\sharp(Y)}
	= \abrkts{F, u}_{(H^1_\sharp(Y))',H^1_\sharp(Y)}
	\quad\forall u\in W_\sharp(Y)\,.
}{DNPr}
\end{lem}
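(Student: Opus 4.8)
The plan is to handle the two assertions in turn, exploiting throughout the direct-sum decomposition $H^1_\sharp(Y) = W_\sharp(Y)\oplus\mathbb{R}$ furnished by the mean operator $\mathcal{M}_Y$ of \reff{MY}.

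\emph{Norm property.} Positive homogeneity and the triangle inequality for \reff{WpD} are inherited verbatim from $\N{\cdot}{L^2(Y)}$, so only definiteness requires an argument: if $\N{\nabla u}{L^2(Y)}=0$ then $\nabla u=0$ almost everywhere, hence $u$ is constant on the connected cell $Y$, and the constraint $\mathcal{M}_Y(u)=0$ built into \reff{Wp} forces this constant to be zero, i.e. $u$ is the zero element of $W_\sharp(Y)$ (equivalently, the zero class under \reff{Eqv}). I would additionally invoke the Poincar\'e--Wirtinger inequality $\N{u}{L^2(Y)}\le C\N{\nabla u}{L^2(Y)}$, valid for $Y$-periodic mean-zero functions, to conclude that \reff{WpD} is equivalent to the norm induced by $H^1_\sharp(Y)$; in particular $W_\sharp(Y)$ is a closed subspace of $H^1_\sharp(Y)$ and hence a Hilbert space, which is what makes the subsequent duality statement meaningful.

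\emph{Identification of the dual.} Write any $v\in H^1_\sharp(Y)$ as $v=\brkts{v-\mathcal{M}_Y(v)}+\mathcal{M}_Y(v)$, the first summand lying in $W_\sharp(Y)$ and the second in $\mathbb{R}$; since $\mathcal{M}_Y$ is bounded on $H^1_\sharp(Y)$ (Cauchy--Schwarz applied to \reff{MY}), the map $P\colon v\mapsto v-\mathcal{M}_Y(v)$ is a bounded linear projection of $H^1_\sharp(Y)$ onto $W_\sharp(Y)$. Given $F\in\brkts{W_\sharp(Y)}'$ set $\widetilde F := F\circ P$; then $\widetilde F\in\brkts{H^1_\sharp(Y)}'$ and $\widetilde F(c)=F(0)=0$ for every constant $c$, so $\widetilde F$ belongs to the set \reff{DS}. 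Conversely, for $G$ in the set \reff{DS}, its restriction $G|_{W_\sharp(Y)}$ lies in $\brkts{W_\sharp(Y)}'$. These two assignments are mutually inverse: if $u\in W_\sharp(Y)$ then $Pu=u$, so $\brkts{F\circ P}|_{W_\sharp(Y)}=F$; and if $G$ annihilates the constants then $G(Pv)=G(v)-G(\mathcal{M}_Y(v))=G(v)$ for all $v\in H^1_\sharp(Y)$. Finally, for $u\in W_\sharp(Y)$ we have $\widetilde F(u)=F(Pu)=F(u)$, which is exactly the compatibility of pairings stated in \reff{DNPr}.

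The only point that is not pure bookkeeping is the boundedness of $\mathcal{M}_Y$, and hence of the projection $P$, on $H^1_\sharp(Y)$ — this is what guarantees that $\widetilde F$ is a genuine continuous functional — and the closedness of $W_\sharp(Y)$ in $H^1_\sharp(Y)$. Both rest on the Poincar\'e--Wirtinger inequality; once that is in hand, the remainder of the argument is the routine algebra of the splitting $H^1_\sharp(Y)=W_\sharp(Y)\oplus\mathbb{R}$.
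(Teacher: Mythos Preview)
Your argument is correct and is precisely the standard one. Note, however, that the paper does not supply its own proof of this lemma: it is stated in Section~\ref{sec:Not} as part of a ``classical exposition'' recalling results from the homogenization references, with no proof following. So there is nothing in the paper to compare against beyond the statement itself.

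One small imprecision in your closing paragraph: the boundedness of $\mathcal{M}_Y$ on $H^1_\sharp(Y)$ is plain Cauchy--Schwarz (as you yourself note), and the closedness of $W_\sharp(Y)$ then follows because it is the kernel of this bounded functional; neither of these actually requires Poincar\'e--Wirtinger. Where Poincar\'e--Wirtinger is genuinely needed is in the \emph{converse} direction of the duality identification: to conclude that the restriction $G|_{W_\sharp(Y)}$ of a $G$ in the set \reff{DS} is continuous with respect to the gradient norm \reff{WpD}, you need $\N{u}{H^1_\sharp(Y)}\le C\N{\nabla u}{L^2(Y)}$ for mean-zero $u$. You do invoke Poincar\'e--Wirtinger earlier for norm equivalence, so the ingredient is present; just the attribution in the final paragraph is slightly off.
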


\medskip

\begin{defn}\label{def:Mab}
Let $c,C\in\mathbb{R}$ be such that $0<c<C$ and let $D\subset\mathbb{R}^N$. We call 
$N\times N$ matrices $A=\brcs{a_{ij}}_{1\leq i,j \leq N} \in (L^\infty(U))^{N\times N}$ 
strongly elliptic, if for any $u\in\mathbb{R}^N$ and a.e. in $D$ it holds that,
\bsplitl{
(A(x)u, u) 
	& \geq c\av{u}^2\,,
\\ 
\av{A(x)u} 
	& \leq C \av{u}\,.
}{StrEll}
\end{defn}

\medskip

In our analysis we mainly have to deal with $A=\brcs{\delta_{ij}}_{1\leq i,j \leq N}$ which obvously 
satisfies the conditions of Definition \ref{def:Mab}.

\medskip

\begin{thm}\label{thm:ElEx}
Let $A$ be a strongly elliptic matrix with $Y$-periodic 
coefficients and $f\in (W_\sharp(Y))'$. Then the problem,
\bsplitl{
\begin{cases}
\textrm{Find $u\in W_\sharp(Y)$ such that}\\
(A\nabla u,\nabla v)_Y = (f,v)
	\qquad \forall v\in W_\sharp(Y)\,.
\end{cases}
}{EP}
has a unique solution. Moreover,
\bsplitl{
\N{u}{W_\sharp(Y)}
	\leq \frac{1}{c}\N{f}{(W_\sharp(Y))Õ}\,.
}{EPEs}
\end{thm}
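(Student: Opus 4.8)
The plan is to apply the Lax–Milgram theorem in the Hilbert space $W_\sharp(Y)$, whose norm is given by \reff{WpD} according to Lemma \ref{lem:WpDual}. First I would verify that $W_\sharp(Y)$ is indeed a Hilbert space: it is a closed subspace of $H^1_\sharp(Y)$ (the kernel of the continuous linear functional $u\mapsto\mathcal{M}_Y(u)$), and the equivalence of $\N{\cdot}{W_\sharp}$ with the full $H^1$-norm on this subspace follows from the Poincaré–Wirtinger inequality $\N{u}{L^2(Y)}\leq C_P\N{\nabla u}{L^2(Y)}$ valid for periodic functions with zero mean. This step also makes precise why $f\in(W_\sharp(Y))'$ can be paired with test functions via \reff{DNPr}.

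Next I would check the hypotheses of Lax–Milgram for the bilinear form $a(u,v):=(A\nabla u,\nabla v)_Y$. Boundedness is immediate from the upper bound in \reff{StrEll}: $|a(u,v)|\leq C\N{\nabla u}{L^2(Y)}\N{\nabla v}{L^2(Y)}=C\N{u}{W_\sharp}\N{v}{W_\sharp}$, using Cauchy–Schwarz. Coercivity is exactly the lower bound in \reff{StrEll}: $a(u,u)=(A\nabla u,\nabla u)_Y\geq c\N{\nabla u}{L^2(Y)}^2=c\N{u}{W_\sharp}^2$. The linear functional $v\mapsto(f,v)$ is bounded on $W_\sharp(Y)$ by definition of the dual norm. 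Lax–Milgram then yields existence and uniqueness of $u\in W_\sharp(Y)$ solving \reff{EP}.

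Finally, the estimate \reff{EPEs} follows by testing \reff{EP} with $v=u$: coercivity gives $c\N{u}{W_\sharp}^2\leq a(u,u)=(f,u)\leq\N{f}{(W_\sharp(Y))'}\N{u}{W_\sharp}$, and dividing by $\N{u}{W_\sharp}$ (trivially true if $u=0$) gives the bound.

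I expect the only genuine subtlety to be the functional-analytic bookkeeping around the quotient space: one must be careful that the mean-zero representative is well defined, that the Poincaré–Wirtinger constant exists for the periodicity cell $Y$ (which uses that $Y$ is bounded and, say, Lipschitz, or simply a cube), and that the identification of the dual space in Lemma \ref{lem:WpDual} is used consistently so that $(f,v)$ makes sense for $v\in W_\sharp(Y)$. Everything else is the standard Lax–Milgram argument, and since $A=\{\delta_{ij}\}$ is the main case of interest the constants can be taken as $c=C=1$, further simplifying the picture.
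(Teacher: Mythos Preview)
Your proposal is correct and is exactly the standard Lax--Milgram argument one uses to establish this result. Note, however, that the paper does not supply its own proof of this theorem: it is stated in the preliminaries section as a classical fact recalled from \cite{Bensoussans1978,Cioranescu2000,Pavliotis2008}, so there is no in-paper proof to compare against---your write-up simply fills in what the cited references contain.
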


\begin{rem}
Since Theorem \ref{thm:ElEx} makes a uniqueness statement, we consider in this case the space 
$W_\sharp(Y)$ defined in \reff{Wp}. We apply this convention in the whole article.
\end{rem}

We frequently use the following space,
\bsplitl{
V(\Omega_T)
	:= L^\infty(]0,T[; L^2(\Omega))\cap L^2(]0,T[;H^1(\Omega))\,.
}{}

\subsection{Review of the classical PNP equations}\label{sec:ClPNP}
Before we come to the main results in this article, we briefly recall basics about the PNP system.
In view of computational convenience (block matrix solvers, e.g. \onlinecite{Schmidt2005}), notational clearity 
and compactness, and a non-linear (i.e., non-symmetric) extension 
of the classical Onsager relations,\cite{Onsager1931a,Onsager1931b} which classically only hold in the linearized 
case,\cite{Allaire2010,Looker2006} motivate us to 
recall the PNP equations from \onlinecite{Schmuck2010} written by field vectors ${\bf u}:=\ebrkts{n^+,n^-,\Phi}'$ as,
\bsplitl{
	{\bf D}_t{\bf u}
	-\pmb{\Delta}_{\mathbb{S}}{\bf u}
	= {\pmb{\rm I}}({\bf u})\,,&\qquad\textrm{for }({\bf t},{\bf x})\in\pmb{\Omega}_T
	:=[\Omega_T,\Omega_T,\Omega_T]'\,,
\\
{\bf u}(0,x)={\bf h}\,,&\qquad\textrm{in }\pmb{\Omega}
	:=[\Omega,\Omega,\Omega]'\,,
\\
{\bf u} = {\bf g}_l\,,&\qquad\textrm{on }\pmb{\Gamma}_T^l
	:= [\Gamma_T^l,\Gamma_T^l,\Gamma_T^l]'\,,
\\
{\bf u} = {\bf g}_r\,,&\qquad\textrm{on }\pmb{\Gamma}_T^r
	:= [\Gamma_T^l,\Gamma_T^l,\Gamma_T^l]'\,,
\\
(\mathbb{S}({\bf u})\pmb{\nabla}_{\bf{n}}{\bf u})^i
	:= 
	s_{ij}({\bf u})\nabla_n^j{\rm u}^j
	= 0\,,&\qquad\textrm{on }\pmb{\Gamma}_T^N:=\partial\pmb{\Omega}_T
		\setminus\pmb{\Gamma}_T^D\,,
		\qquad\textrm{for }i=1,2,3\,,
}{eq:HoCP}
where ${\bf x} := [x,x,x]'$ with $x\in\Omega\subset\mathbb{R}^N$ corresponds to the coordinate 
field for each component of the field vector ${\bf u}$ and ${\bf t}:=[t,t,t]'$ with $t\in ]0,T[$ for any $T\in\mathbb{R}_+$ is the accordingly defined time field. The variables ${\rm u}^1=n^+$, ${\rm u}^2=n^-$, and ${\rm u}^3=\Phi$ represent the concentration of positive ions, the density of negative ions, and the induced electric potential, respectively.
We further use the convention 
$\Omega_T:=]0,T[\times\Omega$. The notation $\pmb{\Omega}_T$ accounts for the fact that the components of 
the field vector ${\bf u}$ are defined in different domains of the porous medium later on, i.e., either in the whole domain $\Omega$ or only in the electrolyte phase $\Omega^s$. We further denote 
$\pmb{\Delta}_{\mathbb{S}}{\bf u}
:=\pmb{\rm div}\,\brkts{\mathbb{S}({\bf u})\pmb{\nabla}{\bf u}}$
with $\mathbb{S}({\bf u}):=\brcs{s_{i_kj_l}({\bf u})}_{\tiny{\begin{array}{c}
	 1\leq i, j\leq d \\
	1\leq k,l\leq N
	\end{array}}}$
for the field indices $1\leq i,j\leq 3$, the coordinate indices $1\leq k,l\leq N$ and $s_{i_kj_l}({\bf u})= s_{ij}({\bf u})\delta_{kl}$ with $\delta_{kl}$ 
the Kronecker symbol, $\nabla_n:={\rm n}\cdot\nabla$ with ${\rm n}$ the normal vector pointing outward of $\Omega$, $\pmb{\Gamma}_T^D:=[\Gamma_T^D,\Gamma_T^D,\Gamma_T^D]'$ with $\Gamma_T^D:=\Gamma^l_T\cup\Gamma^r_T$ and $\pmb{\Gamma}_T^N$, 
$\pmb{\Gamma}_T^\iota$ 
for $\iota=r,l$ 
are correspondingly defined, and
\bsplitl{
\quad	{\bf D}_{\bf t}
	& := \ebrkts{
	\begin{array}{ccc}
	\partial_t & 0 & 0 \\
	0 & \partial_t & 0 \\
	0 & 0 & 0
	\end{array}
	}\,,
\qquad
	\brcs{s_{ij}({\bf u})}_{
	1\leq i,j\leq 3
	}
	:= \ebrkts{
	\begin{array}{ccc}
	1 & 0 & n^+ \\
	0 & 1 & -n^- \\
	0 & 0 & \lambda^2 
	\end{array}
	}\,,
\qquad	\pmb{\nabla}
	:= \mathbb{I}\nabla
	:= \ebrkts{
	\begin{array}{ccc}
	\nabla & 0 & 0 \\
	0 & \nabla & 0 \\
	0 & 0 & \nabla 
	\end{array}
	}\,,
\\
\quad	\pmb{\rm div}
	&:= \mathbb{I}{\rm div}
	:= \ebrkts{
	\begin{array}{ccc}
	{\rm div} & 0 & 0 \\
	0 & {\rm div} & 0 \\
	0 & 0 & {\rm div} 
	\end{array}
	}\,,
\qquad
	{\bf u}
	:= \ebrkts{ n^+,n^-,\Phi}'\,,
\qquad	{\bf I}({\bf u})
	:= 
	\ebrkts{0,0, n^+-n^-}'\,,
\\
\quad	{\bf g}_l
	&:= \ebrkts{ n^+_{l},n^-_{l},\phi_{l}}'\,,
\qquad	{\bf g}_r
	:= \ebrkts{ n^+_{r},n^-_{r},\phi_{r}}'\,,
\qquad	{\bf h}
	:= \ebrkts{{\rm h}^1,{\rm h}^2,0}'\,.
}{eq:BaDe}
$\pmb{\Gamma}^\iota$ represents the Dirichlet (for $\iota=D$) and Neumann (for $\iota=N$) boundary surrounding the 
porous medium $\Omega$, i.e. $\partial\Omega=\Gamma^D\cup\Gamma^N$. Hence, the first equation \reff{eq:HoCP}$_1$ is equivalent to the following classical system,
\bsplitl{
\partial_t n^+ 
	& = {\rm div}\brkts{\nabla n^+ + n^+\nabla\Phi}\,,
\\
\partial_t n^- 
	& = {\rm div}\brkts{\nabla n^- - n^-\nabla\Phi}\,,
\\
-\lambda^2\Delta\Phi
 	& = n^+ - n^-\,,
}{PNP}
which can be interpreted as a gradient flow of the following free energy,
\bsplitl{
F = U-TS = \int\brkts{\sum_{i}{\rm u}^i({\rm log}\,{\rm u}^i - 1)
	+ \sum_iz_i{\rm u}^i\Phi - \lambda^2\av{\nabla\Phi}^2}\,dx\,.
}{FE}
We recall that \reff{FE} builds the basis of dilute solution theory which accounts for thermodynamic quantities 
such as entropy $S$ formed by the first summand in the integral \reff{FE}. The remaining integrands such as energy density of interactions (second term) and energy density of the electric field (third term) constitute to the internal energy $U$. We note that from the energy 
\reff{FE} we can obtain the chemical potential of the ion densities ${\rm u}^1=n^+$ and ${\rm u}^2=n^-$ by 
taking the first variation with respect to ${\rm u}^1$ and ${\rm u}^2$, respectively.

An interesting question is whether the minimization of the free energy \reff{FE} by a gradient 
flow also follows the physically relevant path far from thermodynamic equilibrium. In which physical 
sense does the flow with respect to the Wasserstein distance\cite{Ambrosio2004,Gianazza2009,Herrmann2010,Jordan1998} provide optimality?

\section{The microscopic porous medium formulation and main results}\label{sec:pmPNP} 
The study in this article relies on the system \reff{eq:HoCP} reformulated for periodic porous media. 
A scaling parameter $s$ is defined as the ratio between the microscopic length 
scale $\ell$ and the macroscopic size $L$ of the porous medium, i.e. $s:=\frac{\ell}{L} \ll 1$. It is assumed that $s$ scales 
the periodicity of the reference cell $Y\subset\mathbb{R}^N$ which defines the micro-geometry. 
In this reference cell, we denote 
the fluid (liquid) region by $Y^s\subset Y$ such that its complement is the solid phase. After periodically covering 
the domain $\Omega$ by such cells $Y$, we denote the resulting macroscopic domain 
of the periodic union of the subsets $Y^s$ by $\Omega^s$ and its complement by 
$B^s:=\Omega\setminus\Omega^s$.
Hence, the perforated domain $B^s$ represents the solid phase and $\Omega^s$ the liquid phase, 
see Figure \ref{fig:MicrMacr}.

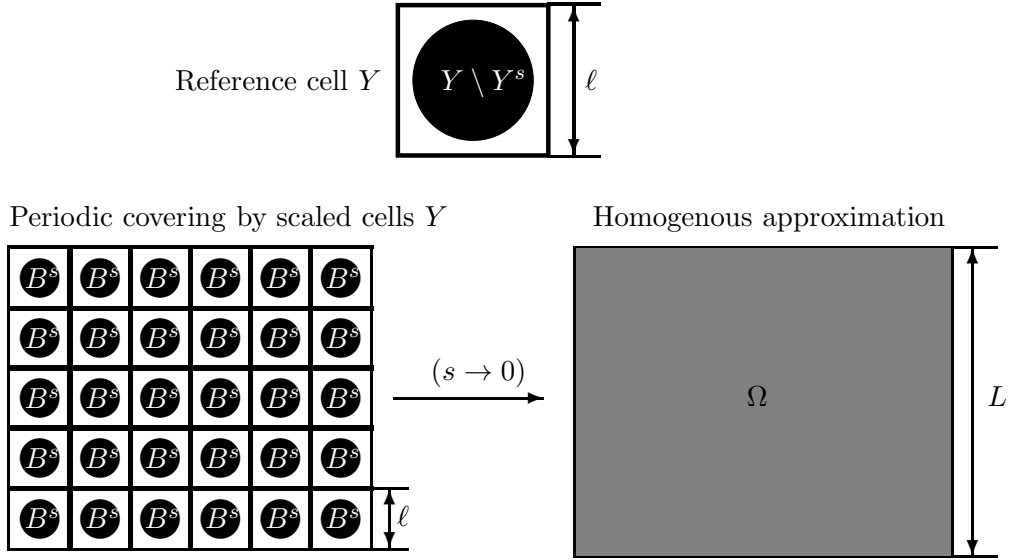
\begin{figure}[htbp]
\begin{center}
\setlength{\unitlength}{1cm}
\begin{picture}(13,7.3)(0,0)					

\setlength{\unitlength}{0.1cm}
\setlength{\unitlength}{1cm}
\thicklines
\multiput(0.0,0.0)(0.8,0.0){6}{\framebox(0.8,0.8)[s]{\put(0.4,0.0){\circle*{0.6}}}}
\multiput(0.0,0.0)(0.8,0.0){6}{\framebox(0.8,0.8)[s]{\put(0.2,0.0){\textcolor{white}{$B^s$}}}}
\multiput(0.0,0.8)(0.8,0.0){6}{\framebox(0.8,0.8)[s]{\put(0.4,0){\circle*{0.6}}}}
\multiput(0.0,0.8)(0.8,0.0){6}{\framebox(0.8,0.8)[s]{\put(0.2,0){\textcolor{white}{$B^s$}}}}
\put(4.8,-0.015){\line(1,0){0.5}}
\multiput(0.0,1.6)(0.8,0.0){6}{\framebox(0.8,0.8)[s]{\put(0.4,0){\circle*{0.6}}}}
\multiput(0.0,1.6)(0.8,0.0){6}{\framebox(0.8,0.8)[s]{\put(0.2,0){\textcolor{white}{$B^s$}}}}
\put(4.8,0.8){\line(1,0){0.5}}
\put(5.05,0.0){\vector(0,1){0.8}}
\put(5.05,0.8){\vector(0,-1){0.8}}
\put(5.15,0.3){$\ell$}
\multiput(0.0,2.4)(0.8,0.0){6}{\framebox(0.8,0.8)[s]{\put(0.4,0){\circle*{0.6}}}}
\multiput(0.0,2.4)(0.8,0.0){6}{\framebox(0.8,0.8)[s]{\put(0.2,0){\textcolor{white}{$B^s$}}}}
\multiput(0.0,3.2)(0.8,0.0){6}{\framebox(0.8,0.8)[s]{\put(0.4,0){\circle*{0.6}}}}
\multiput(0.0,3.2)(0.8,0.0){6}{\framebox(0.8,0.8)[s]{\put(0.2,0){\textcolor{white}{$B^s$}}}}
\put(0.0,4.3){Periodic covering by scaled cells $Y$}
\put(5.6,2.25){($s\to 0$)}
\put(5.1,2.0){\vector(2,0){2.0}}
\put(7.5,0.0){\fcolorbox{black}{grau}{\makebox(4.8,3.9)[]{}}} %
\put(7.75, 4.3){Homogenous approximation}
\put(9.8, 1.9){$\Omega$}

\put(12.53,4.005){\line(1,0){0.6}}
\put(12.53,-0.105){\line(1,0){0.6}}
\put(12.8,2){\vector(0,1){2.0}}
\put(12.8,2.0){\vector(0,-1){2.1}}
\put(13.0,1.9){$L$}

\put(2.2,6.1){Reference cell $Y$}

\put(7.15,7.245){\line(1,0){0.7}}
\put(7.15,5.215){\line(1,0){0.7}}
\put(7.5,6.2){\vector(0,1){1.025}}
\put(7.5,6.25){\vector(0,-1){1.02}}
\put(7.65,6.1){$\ell$}
\put(5,5.2){
\begin{tikzpicture}
\draw[fill=black] (15,15) circle (.8cm); 
\draw[line width=0.6mm] (14,14) rectangle (16,16); 
\end{tikzpicture}
}
\put(5.73,6.1){\textcolor{white}{$Y\setminus Y^s$}}

\end{picture}
\caption{{\bf Left:} Strongly heterogeneous material as a periodic covering of reference cells $Y:=[0,\ell]^N$. 
{\bf Top, middle:} The reference cell $Y$ represents a characteristic mean pore geometry. 
{\bf Right:} The ``homogenization limit'' $s:=\frac{\ell}{L}\to 0$ stands here for the leading order approximation of non-standard two-scale expansions.} 
\label{fig:MicrMacr}
\end{center}
\end{figure}

Under these considerations, the material tensor $\mathbb{S}$ from \reff{eq:HoCP}$_2$ 
depends now on $s$ too, i.e., 
\bsplitl{
\brcs{s_{i_kj_l}^s({\bf x},{\bf u}_s)}_{1\leq i,j\leq 3}
	:=\ebrkts{
	\begin{array}{ccc}
	1 & 0 & n^+_s \\
	0 & 1 &  -n^-_s \\
	0 & 0 & \epsilon(x/s) 
	\end{array}
	}\delta_{kl}\,,\quad\textrm{for }1\leq k,l\leq N\,,
}{eq:epsMaTe}
where $\epsilon(x):= \lambda^2\chi_{\Omega^s}(x)+\alpha\chi_{B^s}(x)$ with the classical dimensionless 
Debye length $\lambda:=\frac{\lambda_D}{L}$ of the PNP system (\ref{PNP}), $\alpha=\frac{\epsilon_m}{\epsilon_f}$ 
is the dimensionless dielectric permittivity where $\epsilon_m$ and $\epsilon_f$ are the dielectric 
permittivities of the solid and liquid phase, respectively. In \reff{eq:epsMaTe} one recognizes 
that also the physical quantities like concentrations $n^+_s,\, n^-_s$ and electric potential 
depend on the scaling parameter $s$. Hence, the problem \reff{eq:HoCP} reads now in the periodic setting as follows,
\bsplitl{
{\bf D}_t{\bf u}_s
	-\pmb{{\rm div}}\,\brkts{\mathbb{S}^s({\bf u}_s)\pmb{\nabla}{\bf u}_s} 
	= {\pmb{\rm I}}({\bf u}_s)\,,
	&\qquad\textrm{in }\pmb{\Omega}^s_T
	:=[\Omega^s_T,\Omega^s_T,\Omega_T]'\,,
\\
{\bf u}_s(0,x)={\bf h}_s\,,&\qquad\textrm{in }\pmb{\Omega}^s
	:=[\Omega^s,\Omega^s,\Omega]'\,,	
\\
	\mathbb{S}^s({\bf u}_s)\pmb{\nabla}_{\bf{n}}{\bf u}_s
	= {\bf 0}\,,&\qquad\textrm{on }\pmb{\Gamma}^N_T\,,
\\
{\bf u}_s = {\bf g}_\iota\,,&\qquad\textrm{on }\pmb{\Gamma}_T^\iota\,,
	\quad\iota=l,r\,,
\\
\brkts{
	\mathbb{S}^s({\bf u}_s)\pmb{\nabla}_{\bf{n}}{\bf u}_s
	}{\bf f}_i
	= 0\,,&\qquad\textrm{on }I^s_T\,,
	\quad\textrm{for }i=1,2\,,
\\
\brkts{
	\mathbb{S}^s({\bf u}_s)\pmb{\nabla}_{\bf{n}}{\bf u}_s
	}{\bf f}_3
	\bigl|_{\Omega^s_T}
	= 
	\brkts{
	\mathbb{S}^s({\bf u}_s)\pmb{\nabla}_{\bf{n}}{\bf u}_s
	}{\bf f}_3
	\bigl|_{B^s_T}
	\,,&\qquad\textrm{on }I^s_T\,,
\\
{\rm u}_s^3\bigl|_{\Omega^s_T} 
	= {\rm u}_s^3\bigl|_{B^s_T}
	\,,&\qquad\textrm{on }I^s_T\,,			
}{eq:PeHoPr}
where ${\bf f}_i:=\ebrkts{\delta_{i1},\delta_{i2},\delta_{i3}}'$ for $i=1,2,3$ and $\delta_{ij}$ is the Kronecker delta and 
$I^s_T:=\partial\Omega^s_T
		\setminus\brcs{\Gamma_T^D\cup\Gamma_T^N}$ the solid-electrolyte interface.
From \reff{eq:PeHoPr} it follows that the flux with respect to ${\bf u}$ is in general not differentiable. This motivates to study \reff{eq:PeHoPr} in the sense of weak solutions. Moreover, 
the main difficulty and difference of this work is the nonlinear structure which prevents the material 
tensor $\mathbb{S}^s({\bf u}_s)$ 
to be a strongly elliptic operator. For convenience, we rewrite 
\reff{eq:PeHoPr} for $[n^+_s,n^-_s,\Phi_s] =[{\rm u}^1_s,{\rm u}^2_s,{\rm u}^3_s]$ 
in the classical form here, that is,
\bsplitl{
\partial_t n^+_s
	= {\rm div}\brkts{\nabla n^+_s + n^+_s\nabla\Phi_s}
	&\qquad\textrm{in }\Omega^s_T\,,
\\
\partial_t n^-_s
	= {\rm div}\brkts{\nabla n^-_s - n^-_s\nabla\Phi_s}
	&\qquad\textrm{in }\Omega^s_T\,,
\\
-{\rm div}\brkts{\epsilon(x/s)\nabla\Phi_s}
	= n^+_s - n^-_s
	&\qquad\textrm{in }\Omega_T\,,
}{clPNPper}
where the corresponding ionic fluxes are defined by,
\bsplitl{
{\rm j}^+_s 
	& := \nabla n^+_s + n^+_s\nabla\Phi_s\,,
\\
{\rm j}^-_s 
	& := \nabla n^-_s - n^-_s\nabla\Phi_s\,,	
}{Flpm}
and the boundary and initial conditions are,
\bsplitl{
n^+_s = {\rm h}^1_s\,\quad n^-_s = {\rm h}^2_s
	& \qquad\textrm{in }\Omega^s\,,
\\
n^+_s = g_\iota\,\quad n^-_s = g_\iota
	& \qquad\textrm{on }\Gamma^\iota_T\,,	
\\
{\rm j}^+_s\cdot{\rm n} = 0\,\quad {\rm j}^-_s\cdot{\rm n} = 0
	&\qquad\textrm{on }\Gamma^N_T\,,
}{ICBCcl}
where $\iota=l,r$. Our main result relies on the assumption of local 
thermodynamic equilibrium, which is widely used and generally accepted.\cite{Bennethum2000,Nelson1999,Schmuck2012,Schmuck2012zamm}

\medskip

\begin{defn}\label{def:LoEq} \emph{(Local thermodynamic equilibrium)} We say that the reference cells $Y$ 
are in local thermodynamic equilibrium if and only if for each $x/s=y$ of the same reference cell $Y$ it holds that
\bsplitl{
\mu_0^r := {\rm log}\,{\rm u}^r(t,x) +z_r{\rm u}^3(t,x)
	= {\rm const.}\,,
}{LoEqCo}
where $\mu_0^r$ denotes a constant value of the chemical potential of positive ($r=1$) 
and negative ($r=2$) ion densities. Hence, the locally constant potential $\mu_0^r$ only assumes different values in different reference cells.
\end{defn}

\medskip

In order to simplify the subsequent derivations, we make the following 

\medskip

{\bf Assumption (ABC):} (Academic Boundary Conditions) \emph{We assume for the ion densities ${\rm u}^r_s$ for $r=1,2$ 
homogeneous Dirichlet boundary conditions on $\partial\Omega$ and no-flux boundary conditions on the 
solid-electrolyte interface $I^s:=\partial\Omega^s\setminus\brcs{\Gamma^D\cup\Gamma^N}$ inside the porous medium $\Omega$. 
For the electric potential ${\rm u}^3_s=\Phi_s$, we assume homogeneous Neumann boundary conditions on the boundary of 
the porous medium, i.e., $\partial\Omega$, and continuity, i.e., $\Phi\bigr|_{\Omega^s}=\Phi\bigr|_{B^s}$, at the solid-electrolyte interface $I^s:=\partial\Omega^s\setminus\brcs{\Gamma^D\cup\Gamma^N}$ as well as continuity of the corresponding fluxes, i.e.,
$
\epsilon(x/s)\nabla_n\Phi_s\bigr|_{\Omega^s}
	= \epsilon(x/s)\nabla_n\Phi_s\bigr|_{B^s}
$
on $I^s\,.$
}

\medskip

These considerations allow us to state our main result which consists of passing to 
the limit $s\to 0$. This limit has the physical meaning of homogeneously mixing the solid and 
the liquid phase in the porous material under constant 
volume fractions, see Figure \ref{fig:MicrMacr}. Such a homogeneous description of a porous medium is a good 
approximation if the medium is very heterogeneous, i.e., $s=\frac{\ell}{L}$ is very small.

\begin{thm}\label{thm:AsExp} We assume that the reference cells $Y$ are in local thermodynamic 
equilibrium. The boundary conditions satisfy the {\rm Assumption (ABC)}. Then, the solution ${\bf u}_s({\bf t},{\bf x}) := [{\rm u}_s^1(t,x),{\rm u}_s^2(t,x),{\rm u}_s^3(t,x)]Õ$ of problem \reff{eq:PeHoPr} 
admits the following formal asymptotic expansions,
\bsplitl{
{\rm u}_s^r
	& = {\rm u}_0^r
	- s \sum_{k=1}^N \xi^{r_k}(t,x,x/s)\frac{\partial {\rm u}_0^3}{\partial x_k}
	+ s^2 \sum_{k,l=1}^N\zeta^{r_{kl}}(t,x,x/s){\rm u}_0^r
	+\dots
	\quad\textrm{for }r=1,2\,,
\\
{\rm u}_s^3
	& = {\rm u}_0^3
	- s \sum_{k=1}^N \xi^{3_k}(x/s)\frac{\partial {\rm u}_0^3}{\partial x_k}
	+ s^2 \sum_{k,l=1}^N\zeta^{3_{kl}}(x/s)\frac{\partial^2 {\rm u}_0^3}{\partial_{x_k}\partial_{x_l}}
	+\dots\,,
}{AsExp}
where $\xi^{r_{k}}(\cdot,\cdot,y)\in V(\Omega_T,W_\sharp(Y^s))$ solves \reff{xir}, 
$\xi^{3_k}(y)\in W_\sharp(Y)$ solves \reff{xi3}, $\zeta^{r_{kl}}(\cdot,\cdot,y)\in V(\Omega_T,W_\sharp(Y^s))$ solves 
\reff{ExUnZetar}, $\zeta^{3_{kl}}(y)\in W_\sharp(Y)$ solves 
\reff{ExUnZeta3}, and ${\bf u}_0$ is a solution of the following upscaled system,
\bsplitl{
\begin{cases}
p\partial_t {\rm u}^r_0
	-p\Delta {\rm u}^r_0 
	+{\rm div}\brkts{\mathbb{D}^r(t,x) \nabla{\rm u}^3_0}
	-{\rm div}\brkts{z_r{\rm u}_0^r\mathbb{M}\nabla{\rm u}^3_0}
	=0
	&\textrm{in }\Omega_T\,,
\\
-{\rm div}\brkts{\pmb{\epsilon}^0\nabla {\rm u}_0^3}
	=p\brkts{ {\rm u}_0^1-{\rm u}_0^2}
	&\textrm{in }\Omega_T\,,
\end{cases}
}{UpScRe}
where $p:=\av{Y^s}/\av{Y}$ is the porosity.
The tensor $\mathbb{D}^r(t,x):=\brcs{{\rm D}^r_{kl}(t,x)}_{1\leq k,l\leq N}$ related to diffusion, 
the tensor $\mathbb{M}:=\brcs{{\rm M}_{kl}}_{1\leq k,l\leq N}$ related to electro-convection, and 
the effective permeability tensor $\pmb{\epsilon}^0:=\brcs{\epsilon^0_{kl}}_{1\leq k.l\leq N}$
are defined by,
\bsplitl{
{\rm D}^r_{ik}(t,x)
	& := \frac{1}{\av{Y}}
	\int_{Y^s}\sum_{j=1}^N\brcs{\delta_{ij}\partial_{y_j}\xi^{r_k}(t,x,y)}\,dy\,,
\\
{\rm M}_{ik}
	&:= \frac{1}{\av{Y}}
	\int_{Y^s}\sum_{j=1}^N\brcs{\delta_{ik}-\delta_{ij}\partial_{y_j}\xi^{3_k}(y)}\,dy\,,
\\
\epsilon^0_{ik}
	:&=
	\frac{1}{\av{Y}}
	\int_Y \sum_{j=1}^N
			\epsilon(y)\brkts{
				\delta_{ik}-\delta_{ij}\partial_{y_j}\xi^{3_k}(y)
			}		
	\,dy\,.
}{Te}
The second order correctors $\zeta^{r_{kl}}$ only exist for positive 
densities ${\rm u}_0^r$.
\end{thm}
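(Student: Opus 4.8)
The plan is to carry out a formal two-scale asymptotic expansion: insert the modified ansatz \reff{AsExp} into the rescaled system \reff{eq:PeHoPr} (equivalently into \reff{clPNPper}--\reff{ICBCcl}), apply the chain rule \reff{nabla} together with its divergence analogue ${\rm div}\,\psi_s={\rm div}_x\psi+\frac{1}{s}{\rm div}_y\psi$, expand the products entering the ionic fluxes \reff{Flpm}, and then match the coefficients of $s^{-2}$, $s^{-1}$ and $s^{0}$ in each of the three equations. I would start with the linear Poisson equation \reff{clPNPper}$_3$. At order $s^{-2}$ one gets $-{\rm div}_y(\epsilon(y)\nabla_y{\rm u}_0^3)=0$; testing with ${\rm u}_0^3$, using coercivity of $\epsilon(y)\mathbb{I}$ and $Y$-periodicity together with the interface continuity from Assumption (ABC), forces ${\rm u}_0^3={\rm u}_0^3(t,x)$. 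At order $s^{-1}$, testing against $W_\sharp(Y)$ yields the reference cell problem \reff{xi3} for $\xi^{3_k}\in W_\sharp(Y)$, which is well posed by Theorem \ref{thm:ElEx} since $\epsilon(y)\mathbb{I}$ is strongly elliptic in the sense of Definition \ref{def:Mab} (its eigenvalue $\epsilon(y)$ lies between $\min\{\lambda^2,\alpha\}$ and $\max\{\lambda^2,\alpha\}$). At order $s^{0}$, averaging over $Y$ annihilates the pure ${\rm div}_y$-term by periodicity and produces $-{\rm div}(\pmb{\epsilon}^0\nabla{\rm u}_0^3)=p({\rm u}_0^1-{\rm u}_0^2)$ with $\pmb{\epsilon}^0$ as in \reff{Te}, the factor $p$ arising because the charge density $n^+_s-n^-_s$ is supported in the liquid part $Y^s$ of the cell.

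Next I would feed ${\rm u}_0^3={\rm u}_0^3(t,x)$ into the local equilibrium identity \reff{LoEqCo}. At leading order $\log{\rm u}_0^r+z_r{\rm u}_0^3=\mu_0^r$ shows that ${\rm u}_0^r$ is likewise $y$-independent and, more importantly, a strictly positive constant on each reference cell, which is what rescues well-posedness of the ionic cell problems. Expanding \reff{Flpm}, the diffusive part $\nabla n^r_s$ carries the constant coefficient $1$ while the drift part $n^r_s\nabla\Phi_s$ inherits the microscopic oscillation of $\nabla\Phi_s=\sum_k(e_k-\nabla_y\xi^{3_k})\partial_{x_k}{\rm u}_0^3+O(s)$. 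Hence the order $s^{-1}$ balance of \reff{clPNPper}$_{1,2}$, combined with the no-flux interface condition, produces the reference cell problems \reff{xir} for $\xi^{r_k}(\cdot,\cdot,y)\in V(\Omega_T,W_\sharp(Y^s))$; this is exactly where the non-standard structure of the expansion enters, namely that the first corrector of ${\rm u}^r$ is driven by $\nabla{\rm u}_0^3$ rather than by $\nabla{\rm u}_0^r$, and well-posedness follows from a Lax--Milgram argument in $W_\sharp(Y^s)$ (using Lemma \ref{lem:WpDual}) because the equilibrium coefficient ${\rm u}_0^r(t,x)$ enters only as a fixed positive number in each cell. Averaging the order $s^{0}$ equations over $Y^s$, again discarding ${\rm div}_y$-terms by periodicity and the ABC no-flux/continuity conditions, yields the homogenized equations \reff{UpScRe}$_1$ with the tensors $\mathbb{D}^r$ and $\mathbb{M}$ from \reff{Te}; the porosity $p=|Y^s|/|Y|$ appears through $\frac{1}{|Y|}\int_{Y^s}{\rm u}_0^r\,dy=p\,{\rm u}_0^r$ (as ${\rm u}_0^r$ is $y$-independent) in the time-derivative and Laplacian terms, while the initial and Dirichlet/Neumann data for ${\bf u}_0$ pass to the limit trivially at leading order.

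Finally, the part of the order $s^{0}$ residual that is not annihilated by averaging defines the second-order correctors: one reads off the cell problem \reff{ExUnZeta3} for $\zeta^{3_{kl}}\in W_\sharp(Y)$ (again solvable via Theorem \ref{thm:ElEx}) and \reff{ExUnZetar} for $\zeta^{r_{kl}}(\cdot,\cdot,y)\in V(\Omega_T,W_\sharp(Y^s))$; since the latter is obtained after dividing the order $s^{0}$ residual of \reff{clPNPper}$_{1,2}$ by the coefficient ${\rm u}_0^r$, it is meaningful only where ${\rm u}_0^r>0$, which is the stated restriction. I expect the main obstacle to be precisely the point emphasized below \reff{eq:PeHoPr}: the tensor $\mathbb{S}^s({\bf u}_s)$ is not strongly elliptic because its off-diagonal entries $n^\pm_s$ depend on the unknown, so the classical cell problems attached to the standard expansion \reff{StAsEx} are ill posed; the scheme closes only because imposing \reff{LoEqCo} at each order linearizes and decouples the reference cell problems (the potential cell problem reducing to the scalar $\epsilon(y)$-problem, the concentration cell problems to constant-coefficient problems on $Y^s$). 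The delicate part of the write-up is therefore to verify that \reff{LoEqCo} is compatible, order by order, with the ansatz \reff{AsExp} and that the data generated at each order lie in the dual spaces required by Lemma \ref{lem:WpDual} and Theorem \ref{thm:ElEx}; the remaining substituting, expanding and collecting of coefficients is routine bookkeeping.
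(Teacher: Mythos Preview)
Your proposal is correct and follows essentially the same route as the paper: insert the ansatz, split the operator via the chain rule into $s^{-2}$, $s^{-1}$, $s^{0}$ contributions, solve the Poisson cell problems first, then the concentration cell problems, and read off the upscaled system \reff{UpScRe} from the Fredholm-type solvability conditions at order $s^{0}$, with the second-order correctors handled last.

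One point where the paper is more explicit than your sketch and which you should sharpen: the precise role of local equilibrium at order $s^{-1}$. In the paper, once one writes the $\mathcal{O}(s^{-1})$ balance for ${\rm u}^r$ and inserts the ansatz ${\rm u}_1^r=-\sum_k\xi^{r_k}\partial_{x_k}{\rm u}_0^3$, the left-hand side of the cell problem still contains the mixed term $\sum_i\partial_{y_i}\bigl(\delta_{ij}(\partial_{x_j}{\rm u}_0^r+z_r{\rm u}_0^r\partial_{x_j}{\rm u}_0^3)\bigr)$. This term does not fit the factorized structure of the ansatz (it carries $\partial_{x_j}{\rm u}_0^r$, not $\partial_{x_j}{\rm u}_0^3$), and it is precisely the identity $\partial_{x_j}{\rm u}_0^r=-z_r{\rm u}_0^r\partial_{x_j}{\rm u}_0^3$ coming from \reff{LoEqCo} that annihilates it and reduces the problem to \reff{xir}. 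You invoke \reff{LoEqCo} mainly to obtain $y$-independence and positivity of ${\rm u}_0^r$; the paper instead gets $y$-independence of ${\rm u}_0^r$ directly from the $\mathcal{O}(s^{-2})$ equations (once ${\rm u}_0^3$ is $y$-independent, those reduce to $-\Delta_y{\rm u}_0^r=0$ on $Y^s$ and Theorem~\ref{thm:ElEx} applies), and reserves local equilibrium for this cancellation at the next order. Both readings are consistent, but when you write it up you should make this cancellation explicit, since it is what actually justifies the non-standard corrector form ${\rm u}_1^r=-\xi^{r_k}\partial_{x_k}{\rm u}_0^3$ and makes \reff{xir} a genuine constant-coefficient Laplace problem on $Y^s$. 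Positivity of ${\rm u}_0^r$ is, as you correctly note, only needed for the second-order corrector $\zeta^{r_{kl}}$, where ${\rm u}_0^r$ appears as a coefficient in the bilinear form.
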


We note that for the classical asymptotic two-scale expansions\cite{Bensoussans1978,Cioranescu2000} an upscaling 
is performed in \onlinecite{Schmuck2012zamm} where also error erstimates between the microscopic periodic formulation and the upscaled equations are derived.

\begin{rem}\label{rem:GePmPNP}
{\rm (1)} Theorem \ref{thm:AsExp} is an extension of the two-scale convergence results in \onlinecite{Schmuck2010} 
by the non-classical asymptotic expansion \reff{AsExp}$_1$. We note that the expansions \reff{AsExp} are 
only formal because convergence of such 
series is a priori not guaranteed and possible boundary layers are 
neglected.\\
{\rm (2)} The effect of the upscaling in the above theorem can be best seen in the change of the material tensor 
\reff{eq:epsMaTe}, which reads for the new system \reff{UpScRe} as follows
\bsplitl{
\brcs{s_{i_kj_l}^s({\bf x},{\bf u}_0)}_{1\leq i,j\leq 3}
	:=\ebrkts{
	\begin{array}{ccc}
	p\delta_{kl} & 0 & -{\rm D}^1_{kl}+z_r{\rm u}^1_0{\rm M}_{kl} \\
	0 & p\delta_{kl} &  -{\rm D}^1_{kl}-z_r{\rm u}^2_0{\rm M}_{kl} \\
	0 & 0 & \epsilon^0_{kl}(x/s) 
	\end{array}
	}\,,\quad\textrm{for }1\leq k,l\leq N\,.
}{UpScMaTe}
A comparison of \reff{UpScMaTe} with \reff{eq:epsMaTe} clearly motivates the use of the term ``material tensor'' 
in the context of porous or composite media. \\
{\rm (3)} The effective material tensor \reff{UpScMaTe} can also be considered as a generalized effective, concentration dependent conductivity tensor (as in heat/diffusion equations).
\end{rem}

We note that the effective macroscopic material tensor \reff{UpScMaTe} reveals that one can controle and enhance material transport by 
adjusting the contrast in the electric permittivities between the electrolyte and the porous medium. The different 
upscaling ideas as well as different physical and mathematical assumptions play an important role for an improved understanding of how different microscopic material properties such as pore geometries, electric permittivities, charge numbers 
of the ions, and surface charges of the porous medium influence the transport characteristics on the macroscale.

Our last result guarantees the existence and uniqueness of solutions for the porous media 
PNP equations \reff{UpScRe}. In order to keep the 
presentation clear and to prevent a technical derivation, we restrict our considerations 
to purely academic boundary conditions in the next 

\medskip

\begin{lem}\label{lem:ExReMo}
Let $\Omega\subset\mathbb{R}^N$ be a domain with smooth boundary 
$\partial\Omega$ and $T>0$ small enough. The boundary conditions satisfy {\rm Assumption (ABC)}. 
Then, the coupled system of equations,
\bsplitl{
\begin{cases}
p\partial_t {\rm u}^r
	-p\Delta {\rm u}^r
	+{\rm div}\brkts{\mathbb{D}^r(t,x) \nabla{\rm u}^3}
	-{\rm div}\brkts{z_r{\rm u}_0^r\mathbb{M}\nabla{\rm u}^3}
	=0
	&\textrm{in }\Omega_T\,,
\\
-{\rm div}\brkts{\pmb{\epsilon}^0\nabla {\rm u}^3}
	=p\brkts{ {\rm u}^1-{\rm u}^2}
	&\textrm{in }\Omega_T\,,
\end{cases}
}{ReUpPNP}
together with the following initial and boundary conditions,
\bsplitl{
{\rm u}^r(0,x)
	& \in L^2(\Omega)\cap H^1_0(\Omega)
	\qquad\qquad\textrm{for }r=1,2\,,
\\
{\rm u}^r 
	& = 0 \qquad\textrm{on }\partial\Omega\times ]0,T[\quad\,\,\;\;\;\;\textrm{ for }r=1,2\,,
\\
\nabla_n{\rm u}^3 
	& = 0 \qquad\textrm{on }\partial\Omega\times ]0,T[\,,
}{ICBC}
has unique weak solutions ${\rm u}^r \in V(\Omega_T):=L^\infty(]0,T[;L^2(\Omega))\cap L^2(]0,T[;H^1_0(\Omega))$ 
and ${\rm u}^3\in L^\infty(]0,T[;H^2(\Omega))$. That means, ${\rm u}^r(t,x)$, $r=1,2$, and 
${\rm u}^3(t,x)$ solve
\bsplitl{
\begin{cases}
p\frac{d}{dt}\brkts{{\rm u}^r,\varphi^r}
	+p\brkts{\nabla {\rm u}^r,\nabla\varphi^r}
	-\brkts{{\mathbb D}^r(t,x)\nabla {\rm u}^3, \nabla\varphi^r}
	&
\\\qquad\qquad\qquad\qquad\qquad
	+\brkts{z_r{\rm u}^r{\mathbb M}\nabla {\rm u}^3, \varphi^r}
	= 0
	&\forall\varphi^r\in H^1_0(\Omega)\,,
\\
\brkts{\pmb{\epsilon}^0\nabla{\rm u}^3, \nabla\varphi^3}
	= p\brkts{{\rm u}^1-{\rm u}^2,\varphi^3}
	&\forall\varphi^3\in\ol{H}^1(\Omega)\,,
\end{cases}
}{WUpPNP}
where $\ol{H}^1(\Omega):=\brcs{u\in H^1(\Omega)\,\bigr|\, \int_\Omega u\,dx = 0}$.
\end{lem}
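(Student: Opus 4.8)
The plan is to prove Lemma \ref{lem:ExReMo} by a fixed-point argument (Banach or Schauder) on a short time interval, decoupling the parabolic system for the ion densities from the elliptic equation for the potential. First I would set up the iteration: given $\tilde{\rm u}^1,\tilde{\rm u}^2\in V(\Omega_T)$, solve the elliptic problem \reff{WUpPNP}$_2$ for ${\rm u}^3$. Here the key point is that $\pmb{\epsilon}^0$ is strongly elliptic in the sense of Definition \ref{def:Mab} — this follows from the representation \reff{Te} together with Theorem \ref{thm:ElEx} applied to the cell problem \reff{xi3} (the usual energy argument shows $\epsilon^0_{ik}\xi_i\xi_k\geq c|\xi|^2$ for the homogenized tensor of a strongly elliptic periodic coefficient). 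With the Neumann boundary condition and the compatibility/normalization built into $\ol{H}^1(\Omega)$, Lax--Milgram gives a unique ${\rm u}^3\in\ol{H}^1(\Omega)$ for each $t$, with $\N{{\rm u}^3}{H^1(\Omega)}\leq C\N{\tilde{\rm u}^1-\tilde{\rm u}^2}{L^2(\Omega)}$. Elliptic regularity on the smooth domain $\Omega$ upgrades this to ${\rm u}^3(t,\cdot)\in H^2(\Omega)$ with the corresponding bound, so in particular $\nabla{\rm u}^3\in L^\infty(]0,T[;H^1(\Omega))\hookrightarrow L^\infty(]0,T[;L^q(\Omega))$ for a good Sobolev exponent $q$, and (using $\tilde{\rm u}^r\in L^2(]0,T[;H^1_0)$ in the right-hand side) $\nabla{\rm u}^3\in L^2(]0,T[;H^2)$; the time regularity of the data $\tilde{\rm u}^r$ transfers to ${\rm u}^3$ since the equation is linear and $t$ enters only through the data and the (bounded, measurable in $t$) tensor $\mathbb{D}^r(t,x)$.

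Next, with ${\rm u}^3$ fixed, solve the two linear parabolic equations \reff{WUpPNP}$_1$ for ${\rm u}^1,{\rm u}^2$ with the homogeneous Dirichlet data and $L^2\cap H^1_0$ initial data. The principal part is $p(-\Delta)$, which is coercive on $H^1_0(\Omega)$; the terms $\di{\mathbb{D}^r(t,x)\nabla{\rm u}^3}$ and $\di{z_r\tilde{\rm u}^r\mathbb{M}\nabla{\rm u}^3}$ are treated as given forcing and as a first-order term with coefficient $z_r\mathbb{M}\nabla{\rm u}^3\in L^\infty_t L^q_x$, respectively. Standard Galerkin / Lions--Magenes theory then yields a unique ${\rm u}^r\in V(\Omega_T)$ with $\partial_t{\rm u}^r\in L^2(]0,T[;H^{-1}(\Omega))$, together with the energy estimate
\bsplit{
\N{{\rm u}^r}{V(\Omega_T)}^2
	\leq C\brkts{\N{{\rm u}^r(0,\cdot)}{L^2(\Omega)}^2
	+\N{\nabla{\rm u}^3}{L^2(]0,T[;L^2(\Omega))}^2}\,,
}
where the convective term is absorbed into the coercive part using Young's inequality (at the cost of a constant depending on $\N{\nabla{\rm u}^3}{L^\infty_tL^q_x}$, hence on the $\tilde{\rm u}^r$-data, but this is harmless on the first pass). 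This defines the solution map $\Phi:(\tilde{\rm u}^1,\tilde{\rm u}^2)\mapsto({\rm u}^1,{\rm u}^2)$ on a suitable ball in $V(\Omega_T)$.

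Finally I would show $\Phi$ is a contraction for $T$ small. Subtracting the equations for two iterates and testing the difference equations with the differences themselves, the elliptic step gives $\N{\delta{\rm u}^3}{H^1}\leq C\N{\delta\tilde{\rm u}^1-\delta\tilde{\rm u}^2}{L^2}$, and the parabolic step gives a Gronwall-type bound $\N{\delta{\rm u}^r}{V(\Omega_T)}\leq C(T)\,\N{\delta\tilde{\rm u}}{V(\Omega_T)}$ where the crucial nonlinear term $\di{z_r(\tilde{\rm u}^r_1-\tilde{\rm u}^r_2)\mathbb{M}\nabla{\rm u}^3_1}$ is controlled by $\N{\delta\tilde{\rm u}^r}{L^2_tL^2_x}\N{\nabla{\rm u}^3_1}{L^\infty_tL^q_x}$ times an $H^1$-norm of the test function, and the factor $C(T)\to 0$ as $T\to 0$ (it carries a positive power of $T$ coming from estimating $L^2_t$ norms of the differences by $T^{1/2}$ times $L^\infty_t$ norms, or from the Gronwall constant $e^{CT}-1$). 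Hence $\Phi$ has a unique fixed point, which is the desired weak solution; uniqueness on $[0,T]$ follows from the same contraction estimate applied directly to two solutions. The regularity claims ${\rm u}^r\in V(\Omega_T)$ and ${\rm u}^3\in L^\infty(]0,T[;H^2(\Omega))$ are exactly what the two sub-steps produce. The main obstacle is the quadratic coupling term $\di{z_r{\rm u}^r\mathbb{M}\nabla{\rm u}^3}$: it is genuinely nonlinear and not lower-order in a scaling sense, so global-in-time solvability is not expected and the smallness of $T$ is essential; care is needed to get the $\nabla{\rm u}^3$ bounds in a norm ($L^\infty_tL^q_x$ via $H^2$-regularity) strong enough to close both the a priori estimate and the contraction, and to make sure the constants depend only on the data and not on the solution being estimated.
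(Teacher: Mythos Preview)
Your proposal is correct and follows essentially the same route as the paper: set up a solution map by freezing $(\tilde{\rm u}^1,\tilde{\rm u}^2)\in V(\Omega_T)$, solve the elliptic Neumann problem for ${\rm u}^3$ (with $H^2$-regularity from the smooth boundary and the strong ellipticity of $\pmb{\epsilon}^0$), then solve the linear parabolic problems for ${\rm u}^1,{\rm u}^2$, and finally prove contraction for small $T$ by subtracting, testing with the differences, and absorbing the quadratic coupling via Young/Sobolev/interpolation estimates that produce a prefactor of order $T$. The paper carries out precisely this Banach fixed-point argument, with the same small-$T$ mechanism; your additional remarks on the ellipticity of the homogenized tensor $\pmb{\epsilon}^0$ and on the compatibility built into $\ol{H}^1(\Omega)$ are points the paper uses implicitly but does not spell out.
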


\medskip

From a rigorous point of view, Lemma \ref{lem:ExReMo} finally guarantees that the second order 
terms in the asymptotic expansion \reff{AsExp} are locally well-defined.


\section{Proof of Theorem \ref{thm:AsExp}}\label{sec:Proof}
We first rewrite the second term in equation \reff{eq:PeHoPr}$_1$ with help of the property 
\reff{nabla} in the following way
\bsplitl{
{\bf div}\brkts{
		\mathbb{S}^s(\pmb{\psi}_s,{\bf x})
		\pmb{\nabla}\pmb{\psi}_s
	}
	& = s^{-2}{\bf div}_y\brkts{\mathbb{S}\brkts{
			\pmb{\psi}_s,{\bf y}
		}\pmb{\nabla}_y\pmb{\psi}_s}
\\&		
	+s^{-1}\brcs{
		{\bf div}_x\brkts{
			\mathbb{S}(\pmb{\psi}_s,{\bf y})\pmb{\nabla}_y\pmb{\psi}_s
		}
		+{\bf div}_y\brkts{
			\mathbb{S}(\pmb{\psi}_s,{\bf y})\pmb{\nabla}_x\pmb{\psi}_s
		}
	}
\\&
	+ {\bf div}_x\brkts{
		\mathbb{S}\brkts{
		\pmb{\psi}_s,{\bf y}
		}\pmb{\nabla}_x
	}
\\ &:=
	\ebrkts{
		s^{-2}\mathcal{S}_0 
		+ s^{-1}\mathcal{S}_1
		+\mathcal{S}_2
	}\pmb{\psi}\brkts{{\bf x},\frac{{\bf x}}{s}}\,,
}{NlLpl}
where $\pmb{\psi}_s({\bf x})$ is an arbitrary function as in \reff{nabla}. For the subsequent 
considerations we apply the following notation,
\bsplitl{
\ebrkts{\mathcal{S}_2}^N\pmb{\psi}_s
	& := \ebrkts{
		{\bf div}\brkts{\mathbb{S}(\pmb{\psi}_s,{\bf y})\pmb{\nabla}_y \pmb{\psi}_s}
	}^N
	:= \ebrkts{
		{\bf div}\brkts{\mathbb{S}(\pmb{\psi}_s,{\bf y})\pmb{\nabla}_y \pmb{\psi}_s}
	}^N
\\&
	:= \brcs{
		{\rm div}_{x}\brkts{{\rm s}_{ij}(\pmb{\psi},{\bf y})\nabla_{x} \psi^j_s}{\bf f}_i
	}_{(i=1,j=3)\cup(i=2,j=3)}
\\&
	= {\rm div}_x(\psi_s^1\nabla_x\psi^3_s){\bf f}_1
	-{\rm div}_x(\psi_s^2\nabla_x\psi_s^3){\bf f}_2\,.
}{prNl}
Hence, the operator $[\cdot]^N$ represents a restriction to nonlinear components given by the 
indices $(i=1,j=3)$ and $(i=2,j=3)$.

\medskip

We make now the formal Ansatz of the asymptotic expansion
\bsplitl{
{\bf u}_s({\bf t},{\bf x})
	\approx
	{\bf u}_0({\bf t},{\bf x},{\bf x}/s)
		+s{\bf u}_1({\bf t},{\bf x},{\bf x}/s)
		+s^2 {\bf u}_2({\bf t},{\bf x},{\bf x}/s)
		+\dots\,,
}{AsEx}
with ${\bf u}_i({\bf t},{\bf x},{\bf y})$ for $i=0,1,2,\dots$ such that
\bsplitl{
\begin{cases}
{\bf u}_i({\bf t},{\bf x},{\bf y}) & \textrm{is defined for $({\bf t},{\bf x})\in\pmb{\Omega}^s_T$ and 
	${\bf y}\in{\bf Y}:=[Y^s,Y^s,Y]'$,}
\\
{\bf u}_i(\cdot,\cdot,{\bf y}) & \textrm{is ${\bf Y}$-periodic.}
\end{cases}
}{ui}
The above Ansatz is formal because there is no guarantee that the series \reff{AsEx} is finite. 

\medskip

{\bf (1) Problem for terms of order $\mathcal{O}(s^{-2})$:} After inserting \reff{AsEx} into \reff{eq:PeHoPr}$_1$, using \reff{NlLpl} and \reff{prNl} gives a sequence of 
problems by equating terms with equal power in $s$, that means,
\bsplitl{
\mathcal{O}(s^{-2}):\qquad
\begin{cases}
-\mathcal{S}({\bf u}_0,{\bf y}){\bf u}_0
	= 0
	& \qquad\textrm{in }{\bf Y}\,,
\\
{\bf u}_0(\cdot,{\bf y})
	\textrm{ ${\bf Y}$-periodic\,.}&
\end{cases}
}{s-2}
If we use definition \reff{NlLpl}, then we can rewrite \reff{s-2}$_1$ as the following equation,
\bsplitl{
-{\bf div}\brkts{
	\mathbb{S}({\bf u}_0,{\bf y})\pmb{\nabla}_{y}{\bf u}_0
	}
	={\bf 0}
	\qquad\textrm{in }{\bf Y}\,,
}{1s-2}
wich is equivalent to the system,
\bsplitl{
\mathcal{O}(s^{-2}):\qquad
\begin{cases}
-\Delta_y {\rm u}^1_0
	-{\rm div}_y\brkts{{\rm u}_0^1\nabla_y{\rm u}^3_0}
	=0
	& \qquad\textrm{in } Y^s\,,
\\
-\Delta_y {\rm u}^2_0
	+{\rm div}_y\brkts{{\rm u}_0^2\nabla_y{\rm u}^3_0}
	=0
	& \qquad\textrm{in } Y^s\,,
\\
-{\rm div}_y\brkts{
		\kappa(y)\nabla_y {\rm u}^3_0
	}
	=0
	&\qquad\textrm{in } Y\,.
\end{cases}
}{2s-2}
One recognizes immediately that solvability must be first established for equation \reff{2s-2}$_3$. This is immediately 
achieved by Theorem \ref{thm:ElEx}. Moreover, this theorem implies that ${\rm u}_0^3(x,y)$ is invariant (constant) in 
$y\in Y$ as a solution of \reff{2s-2}$_3$, i.e.,
\bsplitl{
{\rm u}_0^3(t,x,y)
	= {\rm u}_0^3(t,x)\,.
}{u03}
Using invariance \reff{u03} in equations \reff{2s-2}$_1$ and \reff{2s-2}$_2$ implies with 
Theorem \ref{thm:ElEx} the additional invariances
\bsplitl{
{\rm u}_0^1(t,x,y)
	= {\rm u}_0^1(t,x)
	\quad\textrm{and}\quad
{\rm u}_0^2(t,x,y)
	= {\rm u}_0^2(t,x)\,.
}{u01u02}
Let us go over to the next problem in the sequence of equal power in $s$.

\medskip

{\bf (2) Problem for terms of order $\mathcal{O}(s^{-1})$:} (Reference cell problems) The second problem 
has the form,
\bsplitl{
\mathcal{O}(s^{-1}):\quad
\begin{cases}
-\mathcal{S}_0({\bf u}_0,{\bf y}){\bf u}_1
	= \ebrkts{\mathcal{S}_0({\bf u}_1,{\bf y})}^N{\bf u}_0
		+\mathcal{S}_1({\bf u}_0,{\bf u}_0,{\bf  y}){\bf u}_0
	\qquad\textrm{in }\pmb{Y}\,,
\\
{\bf u}_1(\cdot,{\bf u})
	\textrm{ $Y$-periodic}\,.
\end{cases}
}{s-1}
Let us write \reff{s-1} in a more intuitive form by its single components, i.e.,
\bsplitl{
\mathcal{O}(s^{-1}):\quad
\begin{cases}
-\Delta_y {\rm u}^1_1
	 -{\rm div}_y\brkts{{\rm u}_0^1\nabla_y {\rm u}_1^3}
	= {\rm div}_y\brkts{{\rm u}_1^1\nabla_y{\rm u}_0^3}
	+{\rm div}_x\nabla_y{\rm u}_0^1&
\\\qquad\qquad
	+{\rm div}_x\brkts{{\rm u}_0^1\nabla_y{\rm u}_0^3}
	+{\rm div}_y\nabla_x {\rm u}_0^1
	+{\rm div}_y\brkts{{\rm u}_0^1\nabla_x{\rm u}^3_0}
	&\quad\textrm{in }Y^s\,,
\\
-\Delta_y {\rm u}^2_1
	 -{\rm div}_y\brkts{{\rm u}_0^2\nabla_y {\rm u}_1^3}
	=- {\rm div}_y\brkts{{\rm u}_1^2\nabla_y{\rm u}_0^3}
	+{\rm div}_x\nabla_y{\rm u}_0^2&
\\\qquad\qquad
	-{\rm div}_x\brkts{{\rm u}_0^2\nabla_y{\rm u}_0^3}
	+{\rm div}_y\nabla_x {\rm u}_0^2
	-{\rm div}_y\brkts{{\rm u}_0^2\nabla_x{\rm u}^3_0}
	&\quad\textrm{in } Y^s\,,
\\
-{\rm div}_y\brkts{\kappa(y)\nabla_y{\rm u}_1^3}
	 = {\rm div}_x\brkts{\kappa(y)\nabla_y {\rm u}_0^3}&
\\\qquad\qquad
	+{\rm div}_y\brkts{\kappa(y)\nabla_x{\rm u}_0^3}
	&\quad\textrm{in } Y\,.
\end{cases}
}{1s-1}
The system \reff{1s-1} is a linear,  elliptic second order partial differential equation.
Hence solvability of \reff{1s-1} follows immediately with Lax-Milgram's Theorem by starting with 
problem \reff{1s-1}$_3$.  The fact that ${\bf u}_0$ is independent of ${\bf y}$ together with the linearity 
of \reff{1s-1} and that $\mathcal{S}_0$ only contains derivatives in ${\bf y}$, motivates to make the following Ansatz for 
${\bf u}_1(t,{\bf x},{\bf y})$, that means,
\bsplitl{
{\rm u}_1^r(t,x,y)
	= -\sum_{j=1}^N\xi^{r_j}(t,x,y)\partial_{x_j}{\rm u}^3_0(t,x)
	& \quad\textrm{for }r=1,2\,,
\\
{\rm u}_1^3(t,x,y)
	= -\sum_{j=1}^N\xi^{3_j}(y)\partial_{x_j}{\rm u}^3_0(t,x)\,.
	&
}{Au1}
We use now \reff{Au1} and the independence of ${\bf u}_0$ of ${\bf y}$ in order to rewrite 
\reff{1s-1} as a problem for $\xi^{r_j}$ for $r=1,2$ and $1\leq j\leq N$ as follows,
\bsplitl{
\begin{cases}
	-\Delta_y\xi^{r_j}(t,x,y)\partial_{x_j}{\rm u}_0^3
	-\sum_{i=1}^N\partial_{y_i}\brkts{\delta_{ij}(\partial_{x_j}{\rm u}_0^r+z_r{\rm u}_0^r\partial_{x_j}{\rm u}_0^3)}
\\\qquad\qquad\qquad
	= - {\rm div}_y\brkts{z_r{\rm u}_0^r\nabla_y\xi^{3_j}(y)}\partial_{x_j}{\rm u}_0^3
	&\textrm{in }\Omega^s\times Y^s\,,
\\
	-{\rm div}_y\brkts{
		\kappa(y)\nabla_y\xi^{3_j}(y)
	}
	+\sum_{i=1}^N\partial_{y_i}(\kappa(y)\delta_{ij})
	=0
	&\textrm{in }\Omega\times Y\,.
\end{cases}
}{xi}
We point out that under local thermodynamic equilibrium, that means, in each reference cell we have 
due to the induced separation of scales by the limit $\epsilon\to 0$,
\bsplit{
\partial_{x_j} {\rm u}_0^r
	= -z_r{\rm u}_0^r\partial_{x_j}{\rm u}_0^3\,,
}
see Definition \ref{def:LoEq}. Hence, the term with the summation 
on the left-hand side in \reff{xi}$_1$ disappears.
System \reff{xi} defines the reference cell problems for the porous media corrector functions 
$\xi^{r_j}$ for $r=1,2$ and $1\leq j\leq N$. Such correctors finally define the effective 
tensors \reff{Te}.

\medskip

\begin{rem}\label{rem:Au1}
We point out that the Ansatz \reff{Au1} is an extension from linear homogenization 
theory and is canonically chosen to account for the problem's coupled and nonlinear structure. The 
interpretation of the Ansatz \reff{Au1}$_1$ is that oscillations in the microscopic variable 
of the electrostatic potential dominate the oscillations of the concentration variables. 
\end{rem}

\medskip

\begin{lem}\label{lem:xi^3}
Let ${\rm u}_0^r\in V(\Omega_T)$. There exists a unique solution $\xi^{3_j}(y)\in W_\sharp(Y)$ for each $1\leq j\leq N$ of problem \reff{xi}$_2$, i.e., 
\bsplitl{
\begin{cases}
\textrm{Find $\xi^{3_j}\in W_\sharp(Y)$ such that} &
\\
{\rm a}^3_1\brkts{\xi^{3_j},w}
	= {\rm F}^3_1(w):=\sum_{i=1}^N\brkts{\kappa(y)\delta_{ij},\partial_{y_i}w}_Y
	&\forall w\in W_\sharp(Y)\,,
\end{cases}
}{xi3} 
where ${\rm a}^3_1\brkts{\xi^{3_j},w}:=\brkts{\kappa(y)\nabla_y\xi^{3_j},\nabla_yw}_Y$.
With $\xi^{3_l}(y)\in W_\sharp(Y)$ for $1\leq l\leq N$ also the existence and uniqueness of a solution $\xi^{r_j}(t,x,y)\in V(\Omega_T,W_\sharp(Y^s))$ of 
problem \reff{xi}$_1$ follows, that means,
\bsplitl{
\begin{cases}
\textrm{For each $1\leq j\leq N$ find $\xi^{r_j}(t,x,y)\in V(\Omega_T,W_\sharp(Y^s))$ such that}&
\\\qquad
{\rm a}^r_1\brkts{\xi^{r_j},w^r}
	= {\rm F}^{r_j}_1(w^r)
	\qquad\forall w^r\in L^2(\Omega;W_\sharp(Y^s))\,,&
\end{cases}
}{xir}
where for all $w^r=\phi^r(y)\psi^r(x)\in H^1_0(\Omega_T,W_\sharp(Y))$ we define,
\bsplitl{
{\rm a}^r_1\brkts{\xi^{r_j},w^r}
	& := \brkts{\brkts{\nabla_y\xi^{r_j}(t,x,y),\nabla_y\phi^r}_{Y^s},\psi^r}_\Omega\,,
\\
{\rm F}^{r_j}_1(w^r)
	& 
	:= 
	-\brkts{
		\brkts{ z_r{\rm u}_0^r\nabla_{y}\xi^{3_j}(y),\nabla_{y}\phi^r}_{Y^s}
		,\psi^r
	}_\Omega\,.
}{ar1Fr1}
\end{lem}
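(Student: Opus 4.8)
The plan is to treat the two problems \reff{xi3} and \reff{xir} in the order forced by the structure of the system, exactly as \reff{1s-1} dictates: first the decoupled cell problem for $\xi^{3_j}$, then the problem for $\xi^{r_j}$ which takes $\xi^{3_j}$ as data. For \reff{xi3}, I would first check that $\kappa(y)$ (the $y$-dependent permittivity appearing in \reff{2s-2}$_3$, i.e.\ $\epsilon(y)=\lambda^2\chi_{Y^s}+\alpha\chi_{Y\setminus Y^s}$) is a strongly elliptic matrix in the sense of Definition \ref{def:Mab} with $c=\min(\lambda^2,\alpha)$ and $C=\max(\lambda^2,\alpha)$, so that the bilinear form ${\rm a}^3_1(\cdot,\cdot)=(\kappa\nabla_y\cdot,\nabla_y\cdot)_Y$ is coercive and bounded on $W_\sharp(Y)$ with respect to the norm $\N{\cdot}{W_\sharp}$ of Lemma \ref{lem:WpDual}. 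Next I would verify that the right-hand side ${\rm F}^3_1(w)=\sum_i(\kappa(y)\delta_{ij},\partial_{y_i}w)_Y$ defines an element of $(W_\sharp(Y))'$: it is clearly a bounded linear functional on $H^1_\sharp(Y)$ since $\kappa\in L^\infty(Y)$, and it annihilates constants because $\partial_{y_i}c=0$, so by the identification in \reff{DS}--\reff{DNPr} it lies in $(W_\sharp(Y))'$. Then Theorem \ref{thm:ElEx} applies verbatim and yields a unique $\xi^{3_j}\in W_\sharp(Y)$ together with the estimate $\N{\xi^{3_j}}{W_\sharp(Y)}\le\frac1c\N{{\rm F}^3_1}{(W_\sharp(Y))'}$; one does this for each $j=1,\dots,N$.

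With the $\xi^{3_l}$ now fixed as known $W_\sharp(Y)$-functions, I turn to \reff{xir}. The key observation — already recorded in the remark after \reff{xi} — is that under local thermodynamic equilibrium the lower-order term $\sum_i\partial_{y_i}(\delta_{ij}(\partial_{x_j}{\rm u}_0^r+z_r{\rm u}_0^r\partial_{x_j}{\rm u}_0^3))$ in \reff{xi}$_1$ vanishes identically, so that after dividing out the (nonzero, for positive densities) factor $\partial_{x_j}{\rm u}_0^3$ the equation for $\xi^{r_j}$ reduces to $-\Delta_y\xi^{r_j}=-{\rm div}_y(z_r{\rm u}_0^r\nabla_y\xi^{3_j})$ in $Y^s$ with homogeneous Neumann data on the solid–electrolyte interface. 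For fixed $(t,x)$ this is again a Neumann problem for the Laplacian on $Y^s$, so I would apply Theorem \ref{thm:ElEx} on $W_\sharp(Y^s)$ with $A=\{\delta_{ij}\}$ (trivially strongly elliptic) and right-hand side $w^r\mapsto -(z_r{\rm u}_0^r\nabla_y\xi^{3_j},\nabla_y w^r)_{Y^s}$, which is in $(W_\sharp(Y^s))'$ since ${\rm u}_0^r(t,x)$ is a scalar and $\nabla_y\xi^{3_j}\in L^2(Y^s)$, and which annihilates constants. This gives, for a.e.\ $(t,x)\in\Omega_T$, a unique $\xi^{r_j}(t,x,\cdot)\in W_\sharp(Y^s)$ with $\N{\xi^{r_j}(t,x,\cdot)}{W_\sharp(Y^s)}\le |z_r|\,|{\rm u}_0^r(t,x)|\,\N{\nabla_y\xi^{3_j}}{L^2(Y^s)}$.

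The remaining point — and the only one that is more than bookkeeping — is to upgrade this pointwise-in-$(t,x)$ solvability to membership $\xi^{r_j}\in V(\Omega_T,W_\sharp(Y^s))$, i.e.\ the joint measurability and the correct integrability in $(t,x)$. Here I would invoke that ${\rm u}_0^r\in V(\Omega_T)$, so in particular ${\rm u}_0^r\in L^\infty(0,T;L^2(\Omega))\cap L^2(0,T;H^1_0(\Omega))$; combining the pointwise bound above with the fact that $\N{\nabla_y\xi^{3_j}}{L^2(Y^s)}$ is a fixed constant, one gets $\N{\xi^{r_j}(t,x,\cdot)}{W_\sharp(Y^s)}\le C\,|{\rm u}_0^r(t,x)|$, whence $\xi^{r_j}$ inherits the same $L^\infty_tL^2_x\cap L^2_tL^2_x$ control, and the $H^1_x$-regularity is obtained by differentiating the cell problem in $x$ (legitimate because the operator $-\Delta_y$ and the domain $Y^s$ do not depend on $x$, and $\nabla_x{\rm u}_0^r\in L^2$). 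Measurability in $(t,x)$ with values in the separable Hilbert space $W_\sharp(Y^s)$ follows from the linear and continuous dependence of the solution on the data ${\rm u}_0^r(t,x)$ given by Theorem \ref{thm:ElEx}. This last measurability/integrability step is where I expect the main (though still routine) technical obstacle to lie; everything else is a direct citation of Theorem \ref{thm:ElEx} together with the dual-space identification of Lemma \ref{lem:WpDual}.
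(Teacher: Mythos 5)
Your proposal is correct, and for the cell problem \reff{xi3} it coincides with the paper's argument: both verify that $\kappa(y)$ is strongly elliptic in the sense of Definition \ref{def:Mab} and that the right-hand side annihilates constants, then invoke Theorem \ref{thm:ElEx} (Lax--Milgram on $W_\sharp(Y)$). For the second problem \reff{xir} the routes diverge in how the $(t,x)$-dependence is handled. You solve the cell problem pointwise for each fixed $(t,x)$, obtaining $\N{\xi^{r_j}(t,x,\cdot)}{W_\sharp(Y^s)}\leq C\av{{\rm u}_0^r(t,x)}\N{\nabla_y\xi^{3_j}}{L^2(Y^s)}$, and then upgrade to $\xi^{r_j}\in V(\Omega_T,W_\sharp(Y^s))$ via measurability and the regularity of ${\rm u}_0^r$; this is clean, and in fact the linear dependence of the data on the scalar ${\rm u}_0^r(t,x)$ means the solution separates as ${\rm u}_0^r(t,x)$ times a fixed $y$-profile, which makes the measurability step you flag as the ``main obstacle'' essentially trivial. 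The paper instead applies Lax--Milgram once, globally, on the Bochner-type space $L^2(\Omega;W_\sharp(Y^s))$ (with $t$ as a parameter and separated test functions $\phi^r(y)\psi^r(x)$), checking continuity, coercivity with respect to $\N{\nabla_y\cdot}{L^2(\Omega;L^2(Y^s))}$, and boundedness of ${\rm F}^{r_j}_1$ via $\N{{\rm u}_0^r}{L^2(\Omega)}$ and the already-established $\xi^{3_j}\in W_\sharp(Y)$; this buys a shorter proof at the price of hiding the separated structure your argument exposes. One cosmetic caveat: your phrase about ``dividing out the factor $\partial_{x_j}{\rm u}_0^3$'' should be read as the usual coefficient-matching that defines the corrector problem (the factor may vanish at some points), but since the lemma only asserts solvability of \reff{xir} as stated, this does not affect the validity of your proof.
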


\medskip

\begin{proof}
The lemma is a consequence of Lax-Milgram's thoerem.\\
{\bf Step 1: Problem \reff{xi3}:} The assumptions of Lax-MilgramÕs are easily verified since 
$\kappa(y)$ is a strongly elliptic matrix, see Definition \ref{def:Mab}.
\\
{\bf Step 2: Problem \reff{xir}:} \emph{a)  Continuity:} For ${\rm v}^r,\, {\rm w}^r
=\phi^r(y)\psi^r(x)\in H^1_0(\Omega,W_\sharp(Y^s))$ we can estimate the bilinear form ${\rm a}_1^r$ by,
\bsplitl{
\av{{\rm a}_1^r\brkts{{\rm v}^r,{\rm w}^r}}
	= \av{\brkts{\brkts{
	\nabla_y{\rm v}^r,\nabla_y\phi^r
	}_{Y^s}
	,\psi^r}_\Omega}
	\leq 
	\N{{\rm v}^r}{L^2(\Omega;W_\sharp(Y^s))}\N{\phi^r}{W_\sharp(Y^s)}\N{\psi^r}{L^2(\Omega)}\,,
}{Esar1}
and hence continuity follows.
\\
\emph{b)  Coercivity:} For ${\rm v}^r,\, {\rm w}^r
=\phi^r(y)\psi^r(x)\in H^1_0(\Omega,W_\sharp(Y^s))$ we derive a lower bound by,
\bsplitl{
{\rm a}_1^r\brkts{{\rm v}^r,{\rm v}^r}
	= \brkts{\brkts{
	\nabla_y{\rm v}^r,\nabla_y{\rm v}^r
	}_{Y^s}
	,1}_\Omega
	=
	\N{\nabla_y{\rm v}^r}{L^2(\Omega;L^2(Y^s))}^2\,.
}{Esar2}
\\
\emph{c)  ${\rm F}^{r_j}_1$ is linear and continuous:} For ${\rm w}^r
=\phi^r(y)\psi^r(x)\in H^1_0(\Omega,W_\sharp(Y^s))$ we estimate ${\rm F}^{r_j}_1$ 
by,
\bsplitl{
\av{{\rm F}^{r_j}_1(w^r)}
	& \leq 
	\av{\brkts{
		\brkts{z_r{\rm u}_0^r\nabla_y\xi^{3_j},\nabla_y\phi^r
		}_{Y^s},\psi^r
	}_\Omega}
\\&
	\leq 
	C\N{\nabla_y\xi^{3_j}}{L^2(Y^s)}\N{\nabla_y\phi^r}{L^2(Y^s)}\N{{\rm u}_0^r}{L^2(\Omega)}
		\N{\psi^r}{L^2(\Omega)}\,,
}{EsFrj}
and continuity follows with $\xi^{3_j}\in W_\sharp(Y)$ obtained in Step 1.
\end{proof}

\medskip

{\bf (3) Problem for terms of order $\mathcal{O}(1)$:}
We collect all the terms which do not contain any factor $\epsilon$. As a result we end up with the following 
equation
\bsplitl{
\mathcal{O}(1):\qquad
\begin{cases}
-\mathcal{S}_0({\bf u}_0,{\bf y}) {\bf u}_2
	= 
	\tilde{\mathcal{S}}_1({\bf u}_0,{\bf u}_1,{\bf y}){\bf u}_1
	&
\\\qquad\qquad
	+\tilde{\mathcal{S}}_2({\bf u}_0,{\bf u}_1,{\bf u}_2,{\bf y}){\bf u}_0
	+{\bf I}({\bf u}_0)
	+{\bf D}_{\bf t}{\bf u}_0&
\\
{\bf u}_2(\cdot,{\bf y})
	\textrm{ $Y$-periodic\,,}
\end{cases}
}{s-0}
where $\tilde{\cg S}_1:={\cg S}_1({\bf u}_0,{\bf y})+\ebrkts{{\cg S}_0({\bf u}_1,{\bf y})}^N$ and 
$\tilde{\cg S}_2 := {\cg S}_2({\bf u}_0,{\bf y})+\ebrkts{{\cg S}_1({\bf u}_1,{\bf y})}^N
+\ebrkts{{\cg S}_0({\bf u}_1,{\bf y})}^N$.
In order to study solvability of \reff{s-0} it is an advantage to write \reff{s-0} explicitly for 
each physical quantity. For $r=1,2$ we have
\bsplitl{
\mathcal{O}(1):
\begin{cases}
-\Delta_y {\rm u}_2^r
	= {\rm div}_y\brkts{z_r{\rm u}_0^r\nabla_y {\rm u}^3_2}&
\\\qquad\qquad
	+\bigl\{
		{\rm div}_x\brkts{z_r{\rm u}_0^r\nabla_y{\rm u}_1^3}
		+{\rm div}_y\brkts{z_r{\rm u}_0^r\nabla_x{\rm u}^3_1}&
\\\qquad\qquad
		+{\rm div}_x\nabla_y{\rm u}^r_1
		+{\rm div}_y\nabla_x{\rm u}^r_1
		+{\rm div}_y\brkts{z_r{\rm u}_1^r\nabla_y{\rm u}^3_1}
	\bigr\}&
\\\qquad\qquad
	+\bigl\{
		\Delta_x {\rm u}_0^r
		+{\rm div}_x\brkts{z_r{\rm u}_0^r\nabla_x{\rm u}_0^3}
		+{\rm div}_x\brkts{z_r{\rm u}_1^r\nabla_y{\rm u}_0^3}&
\\\qquad\qquad
		+{\rm div}_y\brkts{z_r{\rm u}_1^r\nabla_x {\rm u}_0^3}
		+{\rm div}_y\brkts{z_r{\rm u}^r_2\nabla_y{\rm u}_0^3}
	\bigr\}
	 -\partial_t{\rm u}_0^r
	 &\textrm{in }\Omega^s\times Y^s\,,
\\
-{\rm div}_y\brkts{\kappa(y)\nabla_y{\rm u}_2^3}
	= {\rm div}_x\brkts{\kappa(y)\nabla_y{\rm u}_1^3}
	+{\rm div}_y\brkts{\kappa(y)\nabla_x{\rm u}_1^3}&
\\\qquad\qquad
	+{\rm div}_x\brkts{\kappa(y)\nabla_x{\rm u}_0^3}
	+({\rm u}_0^1-{\rm u}_0^2)\chi_{Y^s}
	&\textrm{in }Y\,.
\end{cases}
}{1s-0}
Since a solvability constraint implies the effective equation for the macroscopic quantities, we first 
achieve the well-posedness for the system \reff{1s-0}.

\medskip

\begin{lem}\label{lem:ur2u32}
The problem \reff{1s-0}$_2$, i.e.,
\bsplitl{
\begin{cases}
\textrm{Find ${\rm u}_2^3(\cdot,\cdot,y)\in W_\sharp(Y)$ such that}
&\\
{\rm a}_2^3\brkts{{\rm u}_2^3,w^3}
	= {\rm F}_2^3(w^3)
	&\forall w^3\in W_\sharp(Y)\,,
\end{cases}
}{u32}
has a unique solution ${\rm u}_2^3\in H^1(\Omega_T,W_\sharp(Y))$ where we define,
\bsplitl{
{\rm a}_2^3\brkts{{\rm u}_2^3,w^3}
	& := \brkts{\kappa(y)\nabla_y{\rm u}_2^3,\nabla_y w^3}_Y\,,
\\
{\rm F}_2^3(w^3)
	& := \brkts{{\rm div}_x\brkts{\kappa(y)\nabla_y {\rm u}_1^3},w^3}_Y
	-\brkts{\kappa(y)\nabla_x{\rm u}_1^3,\nabla_yw^3}_Y
\\&\quad
	+\brkts{{\rm div}_x\brkts{\kappa(y)\nabla_x{\rm u}_0^3},w^3}_Y
	+\brkts{{\rm u}_0^1-{\rm u}_0^2,w^3}_{Y^s}\,.
}{a32F32}
For $r=1,2$ equation \reff{1s-0}$_1$, that means the following problem,
\bsplitl{
\begin{cases}
\textrm{Find ${\rm u}_2^r(\cdot,\cdot,y)\in W_\sharp(Y^s)$ such that}&
\\
{\rm a}_2^r\brkts{{\rm u}_2^r,w^r}
	= {\rm F}_2^r(w^r)
	&\forall w^r\in W_\sharp(Y^s)\,,
\end{cases}
}{ur2}
has a unique solution ${\rm u}_2^r(\cdot,\cdot,y)\in W_\sharp(Y^s)$, where we for $z_1=1,\, z_2=-1$ 
define, 
\bsplitl{
{\rm a}_2^r\brkts{{\rm u}_2^r,w^r}
	& := \brkts{\nabla_y{\rm u}_2^r,\nabla_y w^r}_{Y^s}
\\
{\rm F}_2^r(w^r)
	& := - \brkts{z_r{\rm u}_0^r\nabla_y{\rm u}_2^3,\nabla_yw^r}_{Y^s}
	+\brkts{{\rm div}_x\brkts{z_r{\rm u}_0^r\nabla_y{\rm u}_1^3},w^r}_{Y^s}
\\&\quad
	-\brkts{z_r{\rm u}_0^r\nabla_x{\rm u}_1^3,\nabla_yw^r}_{Y^s}
	+\brkts{{\rm div}_x\nabla_y{\rm u}_1^r,w^r}_{Y^s}
		-\brkts{\partial_t{\rm u}_0^r,w^r}_{Y^s}
\\&\quad
	-\brkts{\nabla_x{\rm u}_1^r,\nabla_y w^r}_{Y^s}
	-\brkts{z_r{\rm u}_1^r\nabla_y{\rm u}_1^3,\nabla_yw^r}_{Y^s}
	+\brkts{\Delta_x{\rm u}_0^r,w^r}_{Y^s}
\\&\quad
	+\brkts{{\rm div}_x\brkts{z_r{\rm u}_0^r\nabla_x{\rm u}_0^r},w^r}_{Y^s}
	-\brkts{z_r{\rm u}_1^r\nabla_x{\rm u}_0^3,\nabla_yw^r}_{Y^s}
	\,.
}{ar2Fr2}
\end{lem}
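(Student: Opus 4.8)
The plan is to reduce both assertions to the linear elliptic existence result of Theorem~\ref{thm:ElEx} (equivalently, to Lax--Milgram on the quotient spaces $W_\sharp(Y)$ and $W_\sharp(Y^s)$), exactly as in the proof of Lemma~\ref{lem:xi^3}, but in the order dictated by the coupling: one first solves \reff{u32} for ${\rm u}_2^3$ on $W_\sharp(Y)$, and then \reff{ur2} for ${\rm u}_2^r$ on $W_\sharp(Y^s)$, treating ${\rm u}_2^3$ together with ${\rm u}_0^r$, ${\rm u}_1^r$, ${\rm u}_1^3$ as given data. In both problems the bilinear form is the Dirichlet-type form of a strongly elliptic matrix --- $\kappa(y)$ for \reff{u32}, the identity for \reff{ur2} --- so continuity and coercivity on the respective quotient spaces follow at once from Definition~\ref{def:Mab} together with the identification of the $W_\sharp$-norm with $\N{\nabla\cdot}{L^2}$ in Lemma~\ref{lem:WpDual}. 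Hence the whole content of the proof lies in: (i) showing that the functionals ${\rm F}_2^3$ and ${\rm F}_2^r$ are linear and bounded on $W_\sharp(Y)$, resp.\ $W_\sharp(Y^s)$; and (ii) verifying the solvability constraint that they annihilate constants, which, by the description of $(W_\sharp(Y))'$ in Lemma~\ref{lem:WpDual}, is precisely what makes them admissible right-hand sides.

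For (i) with ${\rm F}_2^3$: the terms $\brkts{\kappa(y)\nabla_x{\rm u}_1^3,\nabla_y w^3}_Y$ and $\brkts{{\rm u}_0^1-{\rm u}_0^2,w^3}_{Y^s}$ are bounded once ${\rm u}_0^1-{\rm u}_0^2\in L^2(\Omega_T)$ and, via the Ansatz \reff{Au1}, $\nabla_x{\rm u}_1^3=-\sum_j\nabla_y\xi^{3_j}(y)\,\partial_{x_j}{\rm u}_0^3$ with $\xi^{3_j}\in W_\sharp(Y)$ supplied by Lemma~\ref{lem:xi^3}; the remaining term $\brkts{{\rm div}_x(\kappa(y)\nabla_y{\rm u}_1^3),w^3}_Y$ reduces, after inserting \reff{Au1} and using that $\kappa$ is independent of $x$, to finitely many products of fixed $L^2(Y)$ functions against second $x$-derivatives of ${\rm u}_0^3$, hence is controlled once ${\rm u}_0^3$ enjoys the $L^\infty(]0,T[;H^2(\Omega))$ regularity of the leading order term (cf.\ Theorem~\ref{thm:AsExp} and Lemma~\ref{lem:ExReMo}). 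This yields a bound $\N{{\rm F}_2^3}{(W_\sharp(Y))'}\le C(\N{{\rm u}_0^3}{H^2}+\N{{\rm u}_0^1-{\rm u}_0^2}{L^2})$ with $C$ depending on the $\xi^{3_j}$. For (ii): evaluating ${\rm F}_2^3$ on a constant $c$, the $x$-divergence terms, after one integration by parts in $x$ and insertion of \reff{Au1} and the tensor formulas \reff{Te}, collapse to $c\,\av{Y}^{-1}\brkts{{\rm div}_x(\pmb{\epsilon}^0\nabla_x{\rm u}_0^3)+p({\rm u}_0^1-{\rm u}_0^2)}$, which vanishes precisely because ${\bf u}_0$ solves the upscaled Poisson equation \reff{UpScRe}$_2$. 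Thus ${\rm F}_2^3\in(W_\sharp(Y))'$, Theorem~\ref{thm:ElEx} yields a unique ${\rm u}_2^3(\cdot,\cdot,y)\in W_\sharp(Y)$ with $\N{{\rm u}_2^3}{W_\sharp(Y)}\le c^{-1}\N{{\rm F}_2^3}{(W_\sharp(Y))'}$, and since all data enter linearly with coefficients measurable in $t$ and $H^1$ in $x$, this upgrades to ${\rm u}_2^3\in H^1(\Omega_T,W_\sharp(Y))$.

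The argument for \reff{ur2} is identical in structure, now carried out on $W_\sharp(Y^s)$ with ${\rm u}_2^3$ already available from the first step. Coercivity and continuity of ${\rm a}_2^r(\cdot,\cdot)=\brkts{\nabla_y\cdot,\nabla_y\cdot}_{Y^s}$ are immediate. For linearity and boundedness of ${\rm F}_2^r$ one goes term by term: the pieces carrying $\nabla_y{\rm u}_2^3$, $\nabla_y{\rm u}_1^3$, $\nabla_y{\rm u}_1^r$ are bounded through the first step, Lemma~\ref{lem:xi^3} and \reff{Au1}; the pieces $\brkts{{\rm div}_x\nabla_y{\rm u}_1^r,w^r}_{Y^s}$, $\brkts{\Delta_x{\rm u}_0^r,w^r}_{Y^s}$, $\brkts{{\rm div}_x(z_r{\rm u}_0^r\nabla_x{\rm u}_0^r),w^r}_{Y^s}$ and $\brkts{\partial_t{\rm u}_0^r,w^r}_{Y^s}$ require ${\rm u}_0^r$ to have $L^\infty\cap H^2$ spatial and $H^1$ temporal regularity, again available from Theorem~\ref{thm:AsExp}/Lemma~\ref{lem:ExReMo}; so ${\rm F}_2^r$ is bounded. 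The compatibility condition ${\rm F}_2^r(c)=0$ for constants $c$, obtained once more by integrating the $x$-divergences by parts, using the equilibrium identity of Definition~\ref{def:LoEq} to remove the $\partial_{x_j}{\rm u}_0^r$ contributions, and matching the resulting cell averages with $\mathbb{D}^r$ and $\mathbb{M}$ in \reff{Te}, reproduces exactly the upscaled Nernst--Planck equation \reff{UpScRe}$_1$ for ${\rm u}_0^r$ and therefore holds. Theorem~\ref{thm:ElEx} then gives the unique ${\rm u}_2^r(\cdot,\cdot,y)\in W_\sharp(Y^s)$.

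I expect the main obstacle to be item (ii): the solvability constraints ${\rm F}_2^3(c)=0$ and ${\rm F}_2^r(c)=0$ are not estimates but algebraic identities that must be forced to coincide with the macroscopic equations \reff{UpScRe}, which requires inserting the Ansatz \reff{Au1}, integrating by parts in $x$, invoking the local equilibrium relation of Definition~\ref{def:LoEq}, and carefully bookkeeping the cell averages against the definitions \reff{Te} of $\mathbb{D}^r$, $\mathbb{M}$ and $\pmb{\epsilon}^0$. A secondary, purely technical point is to pin down precisely which spatial and temporal regularity of ${\bf u}_0$, and hence of ${\bf u}_1$ through \reff{Au1}, is needed so that every individual term of ${\rm F}_2^r$ defines a bounded functional on $W_\sharp(Y^s)$; the $L^\infty(]0,T[;H^2(\Omega))$-type regularity used silently above has to be read off from the upscaled system before this lemma can be applied downstream.
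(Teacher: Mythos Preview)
Your proposal is correct and follows essentially the same strategy as the paper: apply Theorem~\ref{thm:ElEx}/Lax--Milgram on $W_\sharp(Y)$ and $W_\sharp(Y^s)$ in the natural order (first ${\rm u}_2^3$, then ${\rm u}_2^r$), and identify the compatibility conditions ${\rm F}_2^3(1)=0$ and ${\rm F}_2^r(1)=0$ with the upscaled equations \reff{UpScRe}$_2$ and \reff{UpScRe}$_1$, respectively, whose well-posedness is supplied by Lemma~\ref{lem:ExReMo}. One minor inaccuracy: no integration by parts in $x$ is needed when evaluating ${\rm F}_2^3$ or ${\rm F}_2^r$ on a constant; the $x$-derivatives simply factor out of the $y$-integrals once the ansatz \reff{Au1} is inserted, and the remaining cell averages are read off directly as the tensors $\pmb{\epsilon}^0$, $\mathbb{D}^r$, $\mathbb{M}$ from \reff{Te}.
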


\medskip

\begin{proof}
{\bf Step 1:} Theorem \ref{thm:ElEx} and Lemma \ref{lem:WpDual} immediately provide existence and uniqueness of a solution 
${\rm u}_2^3\in W_\sharp(Y)$, if we have,
\bsplitl{
\abrkts{{\rm F}^3_2,1}_{(H^1_\sharp(Y^s))',H^1_\sharp(Y^s)}
	= 0\,.
}{u32SoCo}
Equation \reff{u32SoCo} reads in terms of physical quantities as,
\bsplitl{
\brkts{{\rm div}_x\brkts{\kappa(y)\nabla_y {\rm u}_1^3},1}_Y
	-\brkts{{\rm div}_x\brkts{\kappa(y)\nabla_x{\rm u}^3_0},1}_Y
	+\brkts{{\rm u}_0^1-{\rm u}_0^2}_{Y^s}
	= 0\,.
}{u32SoCo1}
This equation defines the upscaled formulation \reff{UpScRe}$_2$ for the electric potential such 
that \reff{u32SoCo} holds true.
\\
{\bf Step 2:} Again, solvability follows by Theorem \ref{thm:ElEx} after verification of ${\rm F}_2^r\in (W_\sharp(Y^s))'$. Due to 
Theorem \ref{thm:ElEx} and Lemma \ref{lem:WpDual}, it must hold for $r=1,2$ that
\bsplitl{
\abrkts{{\rm F}^r_2,1}_{(H^1_\sharp(Y^s))',H^1_\sharp(Y^s)}
	= 0\,,
}{ur2SoCo}
which reads in explicit form as follows,
\bsplitl{
\brkts{{\rm div}_x\nabla_y{\rm u}_1^r,1}_{Y^s}
	& +\brkts{\Delta_x{\rm u}_0^r,1}_{Y^s}
	-\brkts{\partial_t{\rm u}_0^r,1}_{Y^s}
\\
	&+\brkts{{\rm div}_x\brkts{z_r{\rm u}_0^r\nabla_y{\rm u}_1^3},1}_{Y^s}
	+\brkts{{\rm div}_x\brkts{z_r{\rm u}_0^r\nabla_y{\rm u}_0^3},1}_{Y^s}
	=0\,.
}{SoCo}
Since we use \reff{SoCo} represents the by Lemma \ref{lem:ExReMo} well-posed effective model, we herewith guarantee that \reff{ur2SoCo} holds.
\end{proof}

\medskip

With representation \reff{Au1}$_2$ we can rewrite \reff{u32SoCo1} in the following way,
\bsplitl{
-\sum^N_{i,j,k=1}\brkts{
		\partial_{x_i}\brkts{
			\kappa(y)\brkts{
				\delta_{ik}-\delta_{ij}\partial_{y_j}\xi^{3_k}
			}\partial_{x_k}{\rm u}_0^3
		},1
	}_Y
	=\av{Y^s}({\rm u}_0^1 -{\rm u}_0^2)\,.
}{Effu3}
Using \reff{u03} allows us to write \reff{Effu3} more intuitively by,
\bsplitl{
\av{Y}\sum^N_{i,j,k=1}
	\epsilon^0_{ik}
	\frac{\partial^2 {\rm u}_0^3}{\partial x_i\partial x_k}
	=\av{Y^s}({\rm u}_0^1 -{\rm u}_0^2)\,,
}{Effu3-1}
such that $\pmb{\epsilon}^0:=\brcs{\epsilon^0_{ik}}_{1\leq i,k\leq N}$ is defined by
\bsplitl{
\epsilon^0_{ik}
	:=
	-\frac{1}{\av{Y}}
	\brkts{
		\brkts{
			\kappa(y)\brkts{
				\delta_{ik}-\delta_{ij}\partial_{y_j}\xi^{3_k}
			}
		},1
	}_Y\,.
}{eps-ik}
We can write down in the same way effective equations for equations \reff{ur2} by using 
\reff{Au1}$_1$. That means we rewrite \reff{SoCo} in the following way,
\bsplitl{
&\brkts{\partial_{x_i}(\delta_{ik}\partial_{x_k}{\rm u}_0^r),1}_{Y^s}
+\brkts{\partial_{x_i}\brkts{\brcs{-\delta_{ij}\partial_{y_j}\xi^{r_k}}\partial_{x_k}{\rm u}_0^3},1}_{Y^s}
	-\brkts{\partial_t {\rm u}_0^r,1}_{Y^s}
\\
&	+\brkts{\partial_{x_i}\brkts{z_r{\rm u}_0^r\brcs{\delta_{ik}-\delta_{ij}\partial_{y_j}\xi^{3_k}}\partial_{x_k}{\rm u}_0^3},1}_{Y^s}
	= 0\,.
}{Effur-1}
If we apply \reff{u01u02} we can even further simplify \reff{Effur-1} to 
\bsplitl{
	\av{Y^s}\partial_t {\rm u}_0^r
	-\av{Y^s}\Delta{\rm u}_0^r
	& + \partial_{x_i}\brkts{
		\int_{Y^s}\brcs{\delta_{ij}\partial_{y_j}\xi^{r_k}(x,y)}\,dy\partial_{x_k}{\rm u}_0^3
	}
\\
	& - \partial_{x_i}\brkts{
		z_r{\rm u}_0^r\int_{Y^s}\brcs{\delta_{ik}-\delta_{ij}\partial_{y_j}\xi^{3_k}(y)}\,dy\partial_{x_k}{\rm u}_0^3
	}
	= 0\,.
}{Effur-2}
As a consequence of \reff{Effur-2} we obtain the effective diffusion-related tensor $\mathbb{D}^r:=\brcs{{\rm D}_{ik}^r}_{1\leq i,k\leq N}$, i.e.,
\bsplitl{
{\rm D}^r_{ik}
	:= \frac{1}{\av{Y}}
	\int_{Y^s}\brcs{\delta_{ij}\partial_{y_j}\xi^{r_k}(t,x,y)}\,dy
	\quad\forall i,k=1,\dots,N\,,
}{Dkl}
and the electro-diffusion-related tensor $\mathbb{M}:=\brcs{{\rm M}_{ik}}_{1\leq i,k\leq N}$, i.e., 
\bsplitl{
{\rm M}_{ik}
	:= \frac{1}{\av{Y}}
	\int_{Y^s}\brcs{\delta_{ik}-\delta_{ij}\partial_{y_j}\xi^{3_k}(y)}\,dy
	\quad\forall i,k=1,\dots,N\,.
}{Mik}
The definitions \reff{Dkl} and \reff{Mik} finally provide the porous media 
approximation of the Nernst-Planck-Poisson equations,
\bsplitl{
{\bf upscaled\,\,\,\, model:}\quad
\begin{cases}
p\partial_t {\rm u}_0^r
	-p \Delta {\rm u}_0^r
	+ {\rm div}\brkts{\mathbb{D}^r(t,x)\nabla {\rm u}_0^3}
	-{\rm div}\brkts{z_r{\rm u}_0^r\mathbb{M}\nabla{\rm u}_0^3}
	= 0\,,&
\\
-{\rm div}\brkts{\pmb{\epsilon}^0\nabla {\rm u}_0^3}
	= p({\rm u}_0^1 - {\rm u}_0^2)\,,&
\end{cases}
}{pmPNP}
where $p:=\frac{\av{Y^s}}{\av{Y}}$ is the porosity, $r=1,2$, $z_1=1$, and $z_2=-1$.

\medskip

{\bf (4) Derivation of the second order correctors:} 
In order to compute the second order corrector for ${\rm u}_0^3$ we use 
\reff{Au1}$_2$ in equation \reff{1s-0}$_2$, i.e., 
\bsplitl{
-{\rm div}_y\brkts{\kappa(y)\nabla_y{\rm u}_2^3}
	&= 
	-\sum_{i,j,k=1}^N\partial_{x_i}\brkts{\kappa(y)\delta_{ij}\partial_{y_j}\xi^{3_k}\partial_{x_k}{\rm u}_0^3}
\\&	
	-\sum_{i,j,k=1}^N\partial_{y_i}\brkts{\kappa(y)\delta_{ij}\partial_{x_j}(\xi^{3_k}\partial_{x_k}{\rm u}_0^3}
\\
	&+{\rm div}_x\brkts{\kappa(y)\nabla_x{\rm u}_0^3}
	+({\rm u}_0^1-{\rm u}_0^2)\chi_{Y^s}
	\qquad\textrm{in }Y\,.
}{u23_1}
Inserting  equation \reff{Effu3-1} into \reff{u23_1} leads to the following problem,
\bsplitl{
-{\rm div}_y\brkts{\kappa(y)\nabla_y{\rm u}_2^3}
	&=
	-\sum_{k,l=1}^N\epsilon^0_{kl}\frac{\partial^2 {\rm u}_0^3}{\partial x_k\partial x_l}
	-\sum_{i,j,k=1}^N\kappa(y)\delta_{kj}\partial_{y_j}\xi^{3_l}\frac{\partial^2{\rm u}_0^3}{\partial x_l\partial x_k}
\\&	
	-\sum_{i,j,k=1}^N\partial_{y_i}\brkts{\kappa(y)\delta_{ij}\xi^{3_k}}
		\frac{\partial^2 {\rm u}_0^3}{\partial x_l\partial x_k}
	+\sum_{j,l=1}^N\kappa(y)\delta_{jl}\frac{\partial^2 {\rm u}_0^3}{\partial x_j\partial x_l}
	\qquad\textrm{in }Y\,.
}{u23_2}
With equation \reff{u23_2} the right-hand side ${\rm F}^3_2$ in \reff{a32F32}$_2$ can be rewritten 
by,
\bsplitl{
\abrkts{{\rm F}^3_2,w^3}_{(W_\sharp(Y))',W_\sharp(Y)}
	& = \sum^N_{k,l=1} \Biggl[
		-\epsilon^0_{kl} \int_Yw^3\,dy
		-\sum_{i,j=1}^N \int_Y \partial_{y_i}\brkts{\kappa (y)\delta_{ij}\delta_{kj}\xi^{3_l}}w^3\,dy
\\&
		-\sum_{j,k,l=1}^N\int_Y\kappa(y)\delta_{kj}\partial_{y_j}\brkts{\xi^{3_l}-y_l}w^3\,dy
	\Biggr]
	\frac{\partial^2 {\rm u}_0^3}{\partial x_k\partial x_l}
	\quad\forall w^3\in W_\sharp(Y)\,.
}{F32_1}
The same arguments as those for \reff{Au1} suggest to look for a function ${\rm u}^3_2$ 
of the following form,
\bsplitl{
{\rm u}_2^3(t,x,y)
	:= \sum_{k,l=1}^N\zeta^{3_{kl}}(y)\frac{\partial^2 {\rm u}_0^3}{\partial x_k\partial x_l}
	\,,
}{zeta}
where $\zeta^{3_{kl}}$ is the solution of,
\bsplitl{
-{\rm div}_y\brkts{\kappa(y)\nabla_y\zeta^{3_{kl}}}
	&=
	-\epsilon^0_{kl}
	-\sum_{i,j=1}^N \partial_{y_i}\brkts{\kappa (y)\delta_{ij}\delta_{kj}\xi^{3_l}}
		-\sum_{j=1}^N\kappa(y)\delta_{kj}\partial_{y_j}\brkts{\xi^{3_l}-y_l}
	\quad\textrm{in }Y\,.
}{zeta1}

\medskip

\begin{lem}\label{lem:zeta3}
There exists a unique $\zeta^{3_{kl}}$ for $1\leq k,l\leq N$ that solves equation 
\reff{zeta1} in the weak sense, that means, $\zeta^{3_{kl}}$ is a unique solution of the 
following problem,
\bsplitl{
\begin{cases}
\textrm{Find $\zeta^{3_{kl}}\in W_\sharp(Y)$ such that}
\\
{\rm a}_2^3(\zeta^{3_{kl}},w^3)
	= {\rm F}_2^{3_{kl}}(w^3)
	& \forall w^3\in W_\sharp(Y)\,.
\end{cases}
}{ExUnZeta3}
where we define,
\bsplitl{
{\rm a}_2^3(\zeta^{3_{kl}},w^3)
	& := \brkts{\kappa(y)\nabla_y\zeta^{3_{kl}},\nabla_y w^3}_Y\,,
\\
{\rm F}_2^{3_{kl}}(w^3)
	& := -\brkts{\epsilon^0_{kl},w^3}_Y
	-\sum_{i,j=1}^N\brkts{\partial_{y_i}\brkts{\kappa(y)\delta_{ij}\delta_{kj}\xi^{3_l}},w^3}_Y
\\&\quad
	-\sum_{j=1}^N\brkts{\kappa(y)\delta_{kj}\partial_{y_j}\brkts{\xi^{3_l}-y_l},w^3}_Y\,.	
}{a23F3kl2}
\end{lem}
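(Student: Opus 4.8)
The statement is a standard application of the Lax--Milgram theorem on the quotient space $W_\sharp(Y)$, exactly parallel to Lemma \ref{lem:xi^3}, Step 1. The plan is to verify the three hypotheses of Lax--Milgram for the bilinear form ${\rm a}_2^3(\cdot,\cdot)$ and the functional ${\rm F}_2^{3_{kl}}$ on the Hilbert space $W_\sharp(Y)$ equipped with the norm $\N{\cdot}{W_\sharp}=\N{\nabla_y\cdot}{L^2(Y)}$ from Lemma \ref{lem:WpDual}. Since ${\rm a}_2^3$ here is literally the same form as ${\rm a}_1^3$ in \reff{xi3} (namely $\brkts{\kappa(y)\nabla_y\cdot,\nabla_y\cdot}_Y$), continuity and coercivity are inherited directly from the strong ellipticity of $\kappa(y)$: continuity from the upper bound $\av{\kappa(y)u}\leq C\av{u}$ and coercivity from $(\kappa(y)u,u)\geq c\av{u}^2$ in Definition \ref{def:Mab}, combined with the fact that on $W_\sharp(Y)$ the seminorm $\N{\nabla_y\cdot}{L^2(Y)}$ is a genuine norm.

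\textbf{Key steps.} First I would record that $W_\sharp(Y)$ is a Hilbert space with inner product $(\nabla_y u,\nabla_y v)_Y$ and that ${\rm a}_2^3(u,v)=(\kappa(y)\nabla_y u,\nabla_y v)_Y$ satisfies $\av{{\rm a}_2^3(u,v)}\leq C\N{u}{W_\sharp}\N{v}{W_\sharp}$ and ${\rm a}_2^3(u,u)\geq c\N{u}{W_\sharp}^2$, citing Definition \ref{def:Mab}. Second, I would check that ${\rm F}_2^{3_{kl}}\in (W_\sharp(Y))'$: each of the three terms in \reff{a23F3kl2} is linear in $w^3$, and boundedness follows because $\epsilon^0_{kl}\in L^\infty(Y)$ (it is a constant, being an average), because $\xi^{3_l}\in W_\sharp(Y)$ by Lemma \ref{lem:xi^3} so $\kappa(y)\delta_{ij}\delta_{kj}\xi^{3_l}\in L^2(Y)$ and hence $\partial_{y_i}(\kappa(y)\delta_{ij}\delta_{kj}\xi^{3_l})\in (H^1_\sharp(Y))'$ which acts on $W_\sharp(Y)$ via \reff{DNPr} in Lemma \ref{lem:WpDual}, and because $\kappa(y)\delta_{kj}\partial_{y_j}(\xi^{3_l}-y_l)\in L^2(Y)$ (using $\nabla_y\xi^{3_l}\in L^2(Y)$ and $\nabla_y y_l$ constant) so the pairing with $w^3$ is bounded by Cauchy--Schwarz and Poincaré. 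The one subtlety worth spelling out is the well-definedness of ${\rm F}_2^{3_{kl}}$ on the quotient space, i.e. that it annihilates constants: $\abrkts{{\rm F}_2^{3_{kl}},c}=0$ for all $c\in\mathbb{R}$. For the two divergence-form terms this is immediate since constants have zero gradient; for the term $-(\epsilon^0_{kl},w^3)_Y$ one must invoke the \emph{definition} \reff{eps-ik} of $\epsilon^0_{kl}$, which upon integrating $\partial_{y_i}(\kappa(y)\delta_{ij}\delta_{kj}\xi^{3_l})$ over $Y$ (this integral vanishes by $Y$-periodicity) and recognizing that $\int_Y\kappa(y)(\delta_{kl}-\delta_{kj}\partial_{y_j}\xi^{3_l})\,dy=\av{Y}\epsilon^0_{kl}$ exactly cancels the constant term — equivalently, this is the statement that \reff{u23_2} was obtained from a genuine equation, so its right-hand side has zero mean. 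Once ${\rm F}_2^{3_{kl}}\in(W_\sharp(Y))'$ is established (which, by Lemma \ref{lem:WpDual}, is the same as its vanishing on constants plus boundedness on $H^1_\sharp(Y)$), Lax--Milgram yields a unique $\zeta^{3_{kl}}\in W_\sharp(Y)$ solving \reff{ExUnZeta3}.

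\textbf{Main obstacle.} The only genuinely non-routine point is the compatibility (solvability) condition: one must confirm that the right-hand side ${\rm F}_2^{3_{kl}}$ has zero mean over $Y$, since $W_\sharp(Y)$ is the quotient by constants and a Neumann-type problem on a periodic cell is solvable only under this constraint. I expect this to reduce to unwinding the definition of $\epsilon^0_{kl}$ from \reff{eps-ik}: the constant $-\epsilon^0_{kl}$ was precisely chosen in the derivation \reff{u23_2}--\reff{zeta1} to make the mean of the full right-hand side vanish, so the verification is really a bookkeeping check that $\int_Y\kappa(y)(\delta_{kl}-\delta_{kj}\partial_{y_j}\xi^{3_l})\,dy + \int_Y\partial_{y_i}(\kappa(y)\delta_{ij}\delta_{kj}\xi^{3_l})\,dy = \av{Y}\epsilon^0_{kl}$, with the second integral vanishing by periodicity. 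Everything else is a direct transcription of the argument already given for $\xi^{3_j}$ in Lemma \ref{lem:xi^3}.
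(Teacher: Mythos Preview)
Your proposal is correct and follows essentially the same route as the paper: the paper's entire proof is the single sentence ``The existence and uniqueness is an immediate consequence of Theorem~\ref{thm:ElEx} and Lemma~\ref{lem:WpDual},'' which is precisely the Lax--Milgram argument on $W_\sharp(Y)$ you spell out. Your explicit verification of the compatibility condition $\abrkts{{\rm F}_2^{3_{kl}},1}=0$ via the definition of $\epsilon^0_{kl}$ is a useful elaboration that the paper leaves implicit.
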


\medskip

\begin{proof}
The existence and uniqueness is an immediate consequence of Theorem \reff{thm:ElEx} and 
Lemma \reff{lem:WpDual}.
\end{proof}

\medskip

The same considerations as those for the derivation of \reff{zeta1} can be applied to 
\reff{ar2Fr2} (or to \reff{1s-0} in the context of its classical formulation). 
With \reff{zeta} and \reff{Au1} we can rewrite \reff{1s-0} as,
\bsplitl{
-\Delta_y {\rm u}_2^r
	& = \frac{1}{p}\sum_{k,l=1}^N {\rm D}^r_{kl} \frac{\partial^2 {\rm u}_0^3}{\partial x_k\partial x_l}
	-\frac{1}{p}\sum_{k,l=1}^N \partial_{x_k}\brkts{z_r{\rm u}_0^r {\rm M}_{kl}\partial_{x_l} {\rm u}_0^3}
	-\sum_{k,l=1}^N\frac{\partial^2 {\rm u}_0^r}{\partial x_r\partial x_l}
\\&
	-\sum_{j,k,l=1}^N \delta_{kj}\partial_{y_j}\xi^{r_l}\frac{\partial^2 {\rm u}_0^3}{\partial x_l\partial x_k}
	-\sum_{i,j,k=1}^N \partial_{y_i}\brkts{\delta_{ij}\xi^{r_k}}\frac{\partial^2 {\rm u}_0^3}{\partial x_j\partial x_k}
	+\sum_{j,l=1}^N \delta_{jl}\frac{\partial^2 {\rm u}_0^r}{\partial x_j\partial x_l}
\\&
	-\sum_{j,k,l=1}^N \brcs{
		z_r\delta_{kj}\partial_{y_j}\xi^{3_l}\frac{\partial {\rm u}_0^r}{\partial x_l}
			\frac{\partial {\rm u}_0^3}{\partial x_k}
		+z_r{\rm u}_0^r\delta_{kj}\partial_{y_j}\xi^{3_l}\frac{\partial^2 {\rm u}_0^3}{\partial x_l\partial x_k}
	}
\\&
	+\sum_{i,j,k,l=1}^N \partial_{y_i}\brkts{\delta_{ij}\partial_{y_j}\zeta^{3_{kl}}}
		z_r{\rm u}_0^r\frac{\partial^2 {\rm u}_0^3}{\partial x_k\partial x_l}
	-\sum_{i,j,k=1}^N \partial_{y_i}\brkts{\delta_{ij}\xi^{r_k}}z_r
		\frac{\partial {\rm u}_0^r}{\partial x_k}\frac{\partial {\rm u}_0^3}{\partial x_j}
\\&
	-\sum_{i,j,k=1}^N z_r{\rm u}_0^r\partial_{y_i}\brkts{\delta_{ij}\xi^{3_k}}
		\frac{\partial^2 {\rm u}_0^3}{\partial x_k\partial x_j}
	+\sum_{i,j,k,l=1}^N z_r\partial_{y_i}\brkts{\delta_{ij}\xi^{r_k}\partial_{y_j}\xi^{3_l}}
		\frac{\partial {\rm u}_0^r}{\partial x_k}\frac{\partial {\rm u}_0^3}{\partial x_l}
\\&
	+\sum_{j,l=1}^N \brcs{z_r\delta_{jl}\frac{\partial {\rm u}_0^r}{\partial x_j}
		\frac{\partial {\rm u}_0^3}{\partial x_l}
		+ z_r {\rm u}_0^r\delta_{jl}\frac{\partial^2 {\rm u}_0^3}{\partial x_j\partial x_l}
	}
	\qquad\textrm{in }Y^s\,.
}{u2r_1}
We make now a corresponding Ansatz for the functions ${\rm u}_2^r$ with the indices $r=1,2$ under the 
same considerations as those for \reff{zeta}, i.e.,
\bsplitl{
{\rm u}_2^r(t,x,y)
	= \sum_{k,l=1}^N \zeta^{r_{kl}}(t,x,y){\rm u}^r_0\,.
}{zetar}
After inserting definition \reff{zetar} into  \reff{u2r_1} we obtain an equation 
for the second order corrector functions $\zeta^{r_{kl}}$, that means,
\bsplitl{
-\sum_{k,l=1}^N 
	& \Delta_y\zeta^{r_{kl}}{\rm u}_0^r
	 = \sum_{k,l=1}^N 
	\Biggl[
		\frac{1}{p}{\rm D}^r_{kl}
		-\sum_{j=1}^N \delta_{kj}\partial_{y_j}\brkts{\xi^{r_l}}
		-\sum_{i,j=1}^N \partial_{y_i}\brkts{\delta_{ij}\delta_{kj}\xi^{r_k}}
	\Biggr]
		\frac{\partial^2 {\rm u}_0^3}{\partial x_k\partial x_l}
\\&
	+\sum_{k,l=1}^N 
	\Biggl[
		-\frac{z_r}{p}{\rm M}_{kl}
		-\sum_{j=1}^N z_r\delta_{kj}\partial_{y_j}\brkts{\xi^{3_l}-y_l}
\\&
		+\sum_{i,j=1}^N z_r\partial_{y_i}\brkts{\delta_{ij}\delta_{kj}\xi^{r_k}
			\brkts{\partial_{y_j}\xi^{3_l}-1} }
	\Biggr]
		\frac{\partial {\rm u}_0^r}{\partial x_k}\frac{\partial {\rm u}_0^3}{\partial x_l}
\\&
	+ \sum_{k,l=1}^N	\Biggl[
		-\sum_{j=1}^N  \delta_{kj}\partial_{y_j}\brkts{\xi^{3_l}-y_l}
\\&
		-\sum_{i,j=1}^N \partial_{y_i} \brkts{
				\delta_{ij}\delta_{kj}\brkts{
					\xi^{3_k} -\partial_{y_j}\zeta^{3_{kl}}
				}
			}
	\Biggr]
		z_r{\rm u}_0^r \frac{\partial^2 {\rm u}_0^3}{\partial x_k\partial x_l}
	+\sum_{k,l,j=1}^N\brkts{
		\frac{\partial y_l}{\partial y_j}- \delta_{kl}
	}\frac{\partial^2 {\rm u}_0^r}{\partial x_k\partial x_l}	
\,.
}{zetar1}
In order to guarantee solvability of equation \reff{zetar1}, we assume that 
${\rm u}_0^r>0$ for ${\rm u}_0^r(0,x)\geq \eta >0$. This can be obtained by special test function 
techniques as applied in \cite{Schmuck2009} to prove non-negativity of solutions. Existence and uniqueness 
of the corrector functions defined by equations 
\reff{zeta1} and \reff{zetar1} is achieved in the following 

\begin{lem}\label{lem:zetar}
Let ${\rm u}_0^r\in V(\Omega_T)$, ${\rm u}_0^r(t,x)>0$ for all $(t,x)\in\Omega_T$ and $1\leq r\leq N$, and assume that the reference cells $Y$ are in local thermodynamic 
equilibrium, see Definition \reff{def:LoEq}. Then, there exists a unique $\zeta^{r_{kl}}$ for $1\leq k,l\leq N$ that solves equation \reff{zeta1} in the weak sense. That means, $\zeta^{r_{kl}}$ is a unique solution of the 
following problem,
\bsplitl{
\begin{cases}
\textrm{Find $\zeta^{r_{kl}}(t,x,y)\in V(\Omega_T,W_\sharp(Y^s))$ for $1\leq k,l\leq N$ such that}
\\
{\rm a}_2^r(\zeta^{r_{kl}},w^r)
	= {\rm F}_2^{r_{kl}}(w^r)
	\qquad\qquad\qquad \forall w^r\in H^1_0(\Omega,W_\sharp(Y^s))\,.&
\end{cases}
}{ExUnZetar}
where we define for $w^r:=\phi^r(x)\psi^r(y)$,
\bsplitl{
{\rm a}_2^r(\zeta^{r_{kl}},w^r)
	& := \brkts{\brkts{{\rm u}_0^r\kappa(y)\nabla_y\zeta^{r_{kl}},\nabla_y \phi^r}_{Y^s},\psi^r}_{\Omega}\,,
\\
{\rm F}_2^{r_{kl}}(w^r)
	& := \Biggl(
		\brkts{
		 {\rm D}^r_{kl}
		-\sum_{j=1}^N \delta_{kj}\partial_{y_j}\brkts{\xi^{r_l}-y_l}
		-\sum_{i,j=1}^N \partial_{y_i}\brkts{\delta_{ij}\delta_{kj}\xi^{r_k}}
		,\phi^r}_{Y^s}
		\frac{\partial^2 {\rm u}_0^3}{\partial x_k\partial x_l} 
\\&
	+
		\Biggl(
		-{\rm M}_{kl}
		-\sum_{j=1}^N \delta_{kj}\partial_{y_j}\brkts{\xi^{3_l}-y_l}
\\&
		+\sum_{i,j=1}^N \partial_{y_i}\brkts{\delta_{ij}\delta_{kj}\xi^{r_k}
			\brkts{\partial_{y_j}\xi^{3_l}-1} }
		,\phi^r\Biggr)_{Y^s}
		z_r \brkts{\frac{\partial {\rm u}_0^r}{\partial x_k}\frac{ \partial {\rm u}_0^3}{\partial x_l}}
\\&
	+\sum_{j=1}^N\brkts{\frac{\partial y_l}{\partial y_j}-\delta_{kl}}_{Y^s}
		\frac{\partial^2 {\rm u}_0^r}{\partial x_k\partial x_l} 
\\&
	+ 
		\Biggl(
		-{\rm M}_{kl}
		-\sum_{j=1}^N  \delta_{kj}\partial_{y_j}\brkts{\xi^{3_l}-y_l}
\\&
		-\sum_{i,j=1}^N \partial_{y_i} \brkts{
				\delta_{ij}\delta_{kj}\brkts{
					\xi^{3_k} -\partial_{y_j}\zeta^{3_{kl}}
				}
			}
		,\phi^r\Biggr)_{Y^s}
		z_r{\rm u}_0^r\frac{\partial^2 {\rm u}_0^3}{\partial x_k\partial x_l} 
	,\psi^r\Biggr)_\Omega
	\,.	
}{a23Frkl2}
\end{lem}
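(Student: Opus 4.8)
The plan is to prove \reff{ExUnZetar} by the Lax-Milgram theorem, exactly along the lines of the proofs of Lemma~\ref{lem:xi^3} and Lemma~\ref{lem:zeta3}. The one genuinely new feature is that the bilinear form ${\rm a}_2^r$ now carries the weight ${\rm u}_0^r$, so its coercivity is no longer automatic and rests on the strict positivity hypothesis ${\rm u}_0^r(t,x)>0$. Before starting I would collect the regularity facts to be used: by Lemma~\ref{lem:ExReMo} we have ${\rm u}_0^3\in L^\infty(]0,T[;H^2(\Omega))$ and ${\rm u}_0^r\in V(\Omega_T)$, and in addition $0<\eta\leq{\rm u}_0^r\leq M$ on $\Omega_T$ --- the lower bound being the hypothesis (non-negativity can be produced by the special test-function technique of \cite{Schmuck2009}), the upper bound following from a maximum-principle argument for \reff{WUpPNP}$_1$ or, under local equilibrium, from the fact that ${\rm u}_0^r$ is an exponential of the bounded potential ${\rm u}_0^3$.

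Next I would verify the three Lax-Milgram conditions for \reff{ExUnZetar}. \emph{Continuity of ${\rm a}_2^r$:} by Cauchy-Schwarz together with the $L^\infty$-bound on ${\rm u}_0^r$ and the upper ellipticity bound for $\kappa$ from Definition~\ref{def:Mab}, precisely as in \reff{Esar1}. \emph{Coercivity of ${\rm a}_2^r$:} testing with $\zeta^{r_{kl}}$ itself and using ${\rm u}_0^r\geq\eta>0$ with the lower ellipticity bound gives ${\rm a}_2^r(\zeta^{r_{kl}},\zeta^{r_{kl}})\geq\eta c\,\N{\nabla_y\zeta^{r_{kl}}}{L^2(\Omega;L^2(Y^s))}^2$, and the Poincar\'e-Wirtinger norm equivalence on $W_\sharp(Y^s)$ from Lemma~\ref{lem:WpDual} upgrades this to coercivity in the $L^2(\Omega;W_\sharp(Y^s))$-norm. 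This is the single step where positivity is indispensable, and it is exactly the reason why, as noted in Theorem~\ref{thm:AsExp}, the second order correctors $\zeta^{r_{kl}}$ only exist for positive densities ${\rm u}_0^r$.

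It remains to show that ${\rm F}_2^{r_{kl}}$ is a bounded linear functional lying in $(W_\sharp(Y^s))'$. For boundedness I would estimate \reff{a23Frkl2} term by term: the microscopic factors $\xi^{r_l},\xi^{3_l}\in W_\sharp(Y^s)$ and $\zeta^{3_{kl}}\in W_\sharp(Y)$ are already available from Lemma~\ref{lem:xi^3} and Lemma~\ref{lem:zeta3}, while the macroscopic factors $\partial^2_{x_kx_l}{\rm u}_0^3$ and $\partial_{x_k}{\rm u}_0^r\,\partial_{x_l}{\rm u}_0^3$ are controlled in $L^2(\Omega)$ by ${\rm u}_0^3\in H^2(\Omega)$ and ${\rm u}_0^r\in L^\infty\cap H^1$. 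The one delicate term is $\partial^2_{x_kx_l}{\rm u}_0^r$, which a priori needs $H^2$-regularity of ${\rm u}_0^r$ that $V(\Omega_T)$ does not supply; here I would invoke local thermodynamic equilibrium, $\partial_{x_j}{\rm u}_0^r=-z_r{\rm u}_0^r\partial_{x_j}{\rm u}_0^3$ (Definition~\ref{def:LoEq}), which differentiates to $\partial^2_{x_kx_l}{\rm u}_0^r=-z_r\partial_{x_k}{\rm u}_0^r\,\partial_{x_l}{\rm u}_0^3-z_r{\rm u}_0^r\,\partial^2_{x_kx_l}{\rm u}_0^3$, an expression already under control --- this is the second place where local equilibrium is genuinely used. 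Finally, that ${\rm F}_2^{r_{kl}}$ annihilates constants in $y$, i.e.\ the solvability (compatibility) constraint that makes the Neumann-type cell problem well posed, holds by construction: the constant-in-$y$ parts of the right-hand side of \reff{zetar1} carry exactly the cell averages ${\rm D}^r_{kl}$, ${\rm M}_{kl}$ (and $\epsilon^0_{kl}$ through $\zeta^{3_{kl}}$), so that after using $\partial_{y_j}y_l=\delta_{jl}$, the $Y^s$-periodicity to discard pure $y$-divergences, and the definitions \reff{Te}, the $Y^s$-average of the right-hand side vanishes --- in complete analogy with the cancellation already used for $\zeta^{3_{kl}}$ in \reff{F32_1}. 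With the three hypotheses in place, and all bounds being uniform in $t$, Lax-Milgram yields the unique $\zeta^{r_{kl}}\in V(\Omega_T,W_\sharp(Y^s))$.

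The part I expect to be the main obstacle is the regularity bookkeeping for ${\rm F}_2^{r_{kl}}$: pinning down that the only dangerous contribution is $\partial^2_{x_kx_l}{\rm u}_0^r$ and that local equilibrium trades it for admissible quantities, and then carefully confirming the compatibility condition. By contrast, the coercivity step, though conceptually the heart of the statement, is short once strict positivity is granted.
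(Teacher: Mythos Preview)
Your proposal is correct and follows essentially the same route as the paper: Lax--Milgram applied in three steps (continuity of ${\rm a}_2^r$, coercivity via the strict positivity ${\rm u}_0^r>0$, and boundedness of ${\rm F}_2^{r_{kl}}$).

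The only noteworthy difference is one of emphasis in step~(c). The paper does not discuss the compatibility condition or the $\partial^2_{x_kx_l}{\rm u}_0^r$ term at all; instead it isolates as the two ``critical terms'' the products in the microscopic variable, namely $(I)$ containing $\xi^{r_k}(\partial_{y_j}\xi^{3_l}-1)$ and $(II)$ containing $\xi^{3_k}-\partial_{y_j}\zeta^{3_{kl}}$, and estimates $(I)$ using $\nabla_y\xi^{3_l}\in L^3(Y^s)$ together with $\N{{\rm u}_0^3}{H^2}$ and $\N{{\rm u}_0^r}{H^1}$. You gloss over this $y$-product structure (which is where a genuine integrability issue could hide), while conversely the paper glosses over the compatibility constraint and the $\partial^2_{x_kx_l}{\rm u}_0^r$ regularity that you handle through local equilibrium. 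In that sense your write-up and the paper's are complementary; combining them would give the cleanest argument.
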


\begin{proof}
We apply Lax-Milgram's theorem in three steps:\\
{\bf a) Continuity:} For $v(x,y),w(x,y)\in H^1_0(\Omega;W_\sharp(Y^s))$ we have
\bsplitl{
\av{{\rm a}_2^{r}(v,w)} 
	=\av{\brkts{\kappa(y)\nabla_y v,\nabla_y w}_{Y^s,\Omega}}
	\leq C\N{\nabla_y v}{H^1(\Omega;L^2(Y^s))}\N{\nabla_y w}{H^1(\Omega;L^2(Y^s))}\,,
}{a)}
which proves continuity since $\N{\nabla_y v}{L^2(Y^s)}$ defines a norm on $W_\sharp(Y)$ and 
$\kappa \in M(c,C,Y)$.
\\
{\bf b) Coercivity:} For $v(x,y)\in H^1_0(\Omega;W_\sharp(Y^s))$ we can estimate
\bsplitl{
\av{{\rm a}_2^r(v,v)}
	 = \av{\brkts{{\rm u}_0^r\kappa(y)\nabla_y v,\nabla_y v}_{Y^s,\Omega} }
	 \geq c\N{\nabla_y v}{H^1(\Omega;L^2(Y^s))}^2\,.
}{b)}
\\
{\bf c) ${\rm F}^{r_{kl}}_2$ is linear and continuous:} In order to keep the clear representation we only 
study the two critical terms, i.e.,
\bsplitl{
(I) & := \brkts{ 
		\brkts{
		\delta_{ij}\delta_{kj}\xi^{r_k}(\partial_{y_j}\xi^{3_l}-1),\partial_{y_i}\phi^r
		}_{Y^s}
		\brkts{\frac{\partial {\rm u}_0^r}{\partial x_k}\frac{ \partial {\rm u}_0^3}{\partial x_l}}
		,\psi^r
	}_\Omega
\\
(II) & := \brkts{
		\brkts{
			\delta_{ij}\delta_{kj}(\xi^{3_k}-\partial_{y_j}\zeta^{3_{kl}})
			,\partial_{y_i}\phi^r
		}_{Y^s}
		z_r{\rm u}_0^r\frac{\partial^2 {\rm u}_0^3}{\partial x_k\partial x_l}
		,\psi^r
	}_\Omega\,.
}{TIandII}
We need to show that $\av{(I)},\av{(II)}\leq C\N{w}{H^1(\Omega_T;W_\sharp(Y^s))}$. The second 
term $(II)$ can be estimated for $w^r(x,y)=\phi^r(x)\psi^r(y)\in H^1_0(\Omega;W_\sharp(Y^s))$ by,
\bsplitl{
\av{(II)}
	& \leq C\brkts{\N{\xi^{3_k}}{H^1(\Omega;L^2(Y^s))} 
			+ \N{\nabla_y\zeta^{3_{kl}}}{L^2(Y^s)}
		}\N{\nabla_y\phi^r}{L^2(Y^s)}
\\&\quad
		\N{
			\frac{\partial^2 {\rm u}_0^3}{\partial x_k\partial x_l}
		}{L^2(\Omega)}
		\N{{\rm u}_0^r}{H^1(\Omega)}
		\N{\psi^r}{H^1(\Omega)}\,.
}{(II)}
With the regularity available for $\xi,\zeta$ and ${\rm u}_0^r$, the continuity of the second term $(II)$ 
follows. We estimate the first term $(I)$ by, 
\bsplitl{
\av{(I)}
	& \leq C\N{\xi^{r_k}}{H^1(\Omega;W_\sharp(Y^s))}\biggl(
		\N{\nabla_y\xi^{3_l}}{L^3(Y^s)}
\\&\quad
		+1
	\biggr)\N{\nabla_y\phi^r}{L^2(Y^s)}
		\N{
			 {\rm u}_0^3
		}{H^2(\Omega)}
		\N{{\rm u}^r_0}{H^1(\Omega)}
		\N{\psi^r}{H^1(\Omega)}\,,
}{(I)}
which implies the desired continuity.
\end{proof}


\section{Proof of Lemma \ref{lem:ExReMo}: Well-posedness of the upscaled model}\label{sec:WPM}
For each 
${\rm v}^r\in V(\Omega_T)$, $r=1,2$, consider the linear system 
\bsplitl{
\begin{cases}
\qquad
p\partial_t {\rm u}^r
	- p\Delta {\rm u}^r
	= - {\rm div}\brkts{\ol{\mathbb{D}}^{r}(t,x)\nabla{\rm v}^3}
	+ {\rm div}\brkts{z_r{\rm v}^r{\mathbb M}\nabla{\rm v}^3}
	&  \quad\textrm{in }\Omega_T\,,
\\\qquad
-{\rm div}\brkts{\pmb{\epsilon}^0\nabla {\rm v}^3}
	= p\brkts{{\rm v}^1
	-{\rm v}^2}
	&  \quad\textrm{in }\Omega_T\,,
\end{cases}
}{LAVM}
where $\ol{\mathbb{D}}^{r}(t,x)$ indicates that its $t$ and 
$x$ dependence originate from ${\rm v}^r$ for $r=1,2$.
The choice of ${\rm v}^r\in V(\Omega_T)$ guarantees that the right-hand side in \reff{LAVM}$_1$ is in 
$L^2(\Omega)$. Hence, there exists a unique solution ${\rm u}^r\in L^2(0,T;H^2_0(\Omega))$, 
$\partial_t{\rm u}^r \in L^2(0,T;L^2(\Omega))$ by standard parabolic theory.

\medskip

In the same way, take $\tilde{\rm v}^r\in V(\Omega_T)$, $r=1,2$, and let $\tilde{\rm u}^r$ solve,
\bsplitl{
\begin{cases}
\qquad
p\partial_t \tilde{\rm u}^r
	- p\Delta \tilde{\rm u}^r
	= -{\rm div}\brkts{\tilde{\mathbb{D}}^{r}(t,x)\nabla\tilde{\rm v}^3}
	+ {\rm div}\brkts{z_r\tilde{\rm v}^r{\mathbb M}\nabla\tilde{\rm v}^3}
	&  \quad\textrm{in }\Omega_T\,,
\\\qquad
-{\rm div}\brkts{\pmb{\epsilon}^0\nabla\tilde{\rm v}^3}
	= p\brkts{\tilde{\rm v}^1
	-\tilde{\rm v}^2}
	&  \quad\textrm{in }\Omega_T\,.
\end{cases}
}{LAVM2}

\medskip

After subtracting \reff{LAVM2} from \reff{LAVM}, we obtain the following 
equation for $\hat{\rm u}^r := {\rm u}^r - \tilde{\rm u}^r$, 
$\hat{\mathbb D}^{0,r} := \ol{\mathbb D}^{0,r} - \tilde{\mathbb D}^{0,r}$, 
for $r=1,2$, i.e.,
\bsplitl{
\begin{cases}
\quad
	p\partial_t \hat{\rm u}^r
	- p\Delta \hat{\rm u}^r
	=
	 -{\rm div}\brkts{\hat{\mathbb D}^{0,r}(t,x)\nabla\ol{\rm v}^3}
	- {\rm div}\brkts{\tilde{\mathbb D}^{0,r}(t,x)\nabla\hat{\rm v}^3}
& \\\qquad\qquad\qquad\qquad
	+ z_r
	{\rm div}
		\brkts{
			\hat{\rm v}^r{\mathbb M}\nabla\ol{\rm v}^3
+			\tilde{\rm v}^r {\mathbb M}\nabla\hat{\rm v}^3
		}
	&  \quad\textrm{in }\Omega_T\,,
\\
\quad
-{\rm div}\brkts{\pmb{\epsilon}^0\nabla\hat{\rm v}^3}
	= p\brkts{\hat{\rm v}^1
	-\hat{\rm v}^2}
	&  \quad\textrm{in }\Omega_T\,.
\end{cases}
}{LAVM3}
The test function $\hat{\rm u}^r$, $r=1,2$, in \reff{LAVM3} together with Sobolev inqualities and 
$L^p$-interpolation estimates induce the following inequality,
\bsplitl{
\frac{p}{2}\frac{d}{dt} \brkts{\Ll{\hat{\rm u }^1}^2
		+ \Ll{\hat{\rm u }^2}^2}
	& + \brkts{p-\frac{\alpha}{2}}\brkts{\Ll{\nabla\hat{\rm u}^1}^2
		+\Ll{\nabla\hat{\rm u}^2}^2}
\\
	& \leq C(\alpha,\epsilon,\hat{\rm v}^r,\tilde{\rm v}^r)\brkts{\Ll{\hat{\rm v}^1}^2 +\Ll{\hat{\rm v}^2}^2}
	+\epsilon \brkts{\Ll{\nabla \hat{\rm v}^1}^2+\Ll{\nabla \hat{\rm v}^2}^2}\,,
}{Es1}
which is a consequence of the following estimates,
\bsplitl{
\av{\brkts{\hat{\mathbb{D}}^{0,r}\nabla\ol{\rm v}^3,\nabla\hat{\rm u}^r}}
	& \leq C\N{\ol{\rm v}^3}{H^2}\max_{1\leq k,l\leq N} \N{\hat{\rm D}^{0,r}_{kl}}{L^3}\Ll{\nabla\hat{\rm u}^r}
\\
	&\leq C\N{\ol{\rm v}^3}{H^2}\Ll{\hat{\rm v}^r}^{1/2}\Ll{\nabla\hat{\rm v}^r}^{1/2}\Ll{\nabla\hat{\rm u}^r}
\\
	&\leq C(\alpha,\epsilon)\N{\ol{\rm v}^3}{H^2}^2\Ll{\hat{\rm v}^r}^{2}
		+\epsilon\Ll{\nabla\hat{\rm v}^r}^{2}
		+ \frac{\alpha}{8}\Ll{\nabla\hat{\rm u}^r}^2	\,,
}{Es2}
and 
\bsplitl{
\av{\brkts{\tilde{\mathbb{D}}^{0,r}\nabla\hat{\rm v}^3,\nabla\hat{\rm u}^r}}
	& \leq C(\alpha) \N{\tilde{\rm v}^r}{H^1}^2\brkts{\Ll{\Delta\hat{\rm v}^3}\Ll{\nabla\hat{\rm v}^3}}
		+\frac{\alpha}{8}\Ll{\nabla\hat{\rm u}}^2
\\
	& \leq C(\alpha) \N{\tilde{\rm v}^r}{H^1}^2\Ll{\nabla\hat{\rm v}^3}\brkts{
			\Ll{\hat{\rm u}^1}^2 + \Ll{\hat{\rm v}^2}^2}
		+\frac{\alpha}{8}\Ll{\nabla\hat{\rm u}}^2
\,.
}{Es3}
It is immediately clear that the terms containing $\mathbb{M}$ can be controlled by the same bounds, since 
$\mathbb{M}$ is just a constant matrix. After integrating \reff{Es1} with respect to time we get,
\bsplitl{
& \frac{p}{2}\brkts{\N{\hat{\rm u }^1}{L^\infty(0,T;L^2(\Omega))}^2
		 + \N{\hat{\rm u }^2}{L^\infty(0,T;L^2(\Omega))}^2}
\\&
	+ \frac{\alpha}{2}\brkts{\N{\nabla\hat{\rm u}^1}{L^2(0,T;L^2(\Omega))}^2
		+\N{\nabla\hat{\rm u}^2}{L^2(0,T;L^2(\Omega))}^2}
\\&\qquad
	\leq C(\epsilon,\alpha,\hat{\rm v}^r,\tilde{\rm v}^r)T\brkts{\N{\hat{\rm v}^1}{L^\infty(0,T;L^2(\Omega))}^2 
		+\N{\hat{\rm v}^2}{L^\infty(0,T;L^2(\Omega))}^2}
\\&\qquad\quad
	+\epsilon\brkts{
		\N{\nabla\hat{\rm v}^1}{L^2(0,T;L^2(\Omega))}^2
		+\N{\nabla\hat{\rm v}^2}{L^2(0,T;L^2(\Omega))}^2
	}
		\,,
}{FiEs}
By choosing $T$ and $\epsilon$ such that $C(\alpha,\hat{\rm v}^r,\tilde{\rm v}^r)T \leq p/4$ 
and $\epsilon\leq \alpha/4$, we obtain a unique solution by Banach's fixed point 
theorem. The statement that ${\rm u}^3\in L^\infty(]0,T[;H^2(\Omega))$ is an immediate consequence 
of elliptic regularity theory.

          \nocite{*}
          \bibliography{extpmPNP}
          %

\medskip

{\bf Acknowledgement.}\\
Main parts of this research were conducted by the author at the Chemical Engineering Department at MIT and
supported by the Swiss National Science Foundation (SNSF) under the prospective 
researcher Grant PBSKP2-12459/1.



\end{document}